\newtheorem{theorem}{Theorem}
\newtheorem{lemma}{Lemma}
\newtheorem{proposition}{Proposition}
\newtheorem{corollary}{Corollary}
\newtheorem{definition}{Definition}
\newcommand{\qed}{\hfill $\Box$}
\renewcommand{\vec}[1]{\mathbf{#1}}
\date{}
\title{
On the Inefficiency of\\
the Uniform Price Auction
\protect\footnote{
Work supported by the research project ``DDCOD'' (PE6-213). The research project is implemented within the framework of the Action ``Supporting Postdoctoral Researchers'' of the Operational Program ``Education and Lifelong Learning'' (Action’s Beneficiary: General Secretariat for Research and Technology), and is co-financed by the European Social Fund (ESF) and the Greek State.
}
}
\author{
Evangelos Markakis\quad\quad Orestis Telelis\medskip\\
Department of Informatics,\\
Athens University of Economics and Business, Greece.\\
\texttt{\{markakis,telelis\}@gmail.com}
}
\begin{document}
\bibliographystyle{plain}
\begin{titlepage}
\maketitle

\begin{abstract}
We present our results on Uniform Price Auctions, one of the standard sealed-bid multi-unit auction formats, for selling multiple identical units of a single good to multi-demand bidders. Contrary to the truthful and economically efficient multi-unit Vickrey auction, the Uniform Price Auction encourages strategic bidding and is socially inefficient in general, partly because of a "Demand Reduction" effect; bidders tend to bid for fewer (identical) units,  so as to receive them at a lower uniform price. Despite its inefficiency, the uniform pricing rule is widely popular by its appeal to the natural anticipation, that identical items should be identically priced. Application domains of its variants include sales of U.S. Treasury notes to investors, trade exchanges over the internet facilitated by popular online brokers, allocation of radio spectrum licenses etc. 

In this work we study equilibria of the Uniform Price Auction for bidders with (symmetric) submodular valuation functions, over the number of units that they win. We investigate pure Nash equilibria of the auction in undominated strategies; we produce a characterization of these equilibria that allows us to prove that a fraction $1-e^{-1}$ of the optimum social welfare is always recovered in undominated pure Nash equilibrium -- and this bound is essentially tight. Subsequently, we study the auction under the incomplete information setting and prove a bound of $4-\frac{2}{k}$ on the economic inefficiency of (mixed) Bayes Nash equilibria that are supported by undominated strategies.
\end{abstract}
\end{titlepage}

\section{Introduction}
\label{section:introduction}

We study the {\em Uniform Price Auction}, a standard multi-unit auction
format, for allocating multiple units of a single good to multi-demand bidders
within a single auction process. Multi-unit auctions are being applied in a
variety of diverse trade exchanges, including online sales over the Internet
held by various brokers~\cite{Ockenfels06}, allocation of radio spectrum
licenses~\cite{Milgrom04}, sales of U.S. Treasury notes to
investors~\cite{treasury}, and allocation of advertisement slots on Internet
sites~\cite{Edelman07}. The particular feature of the Uniform Price Auction is
a single price for every unit allocated to any bidder; this makes it a proper
representative of a wider category of uniform pricing auctions, as opposed to
{\em discriminatory pricing} ones, that sell identical units of a
single item at different prices~\cite{Ockenfels06,Krishna02}). As observed by
Milgrom in~\cite{Milgrom04}, resurgence of interest in auction design is owed
to a large extent to the success of multi-unit and -- particularly -- uniform
price auction formats. Charging a uniform price for identical items, apart
from appealing to the intuitive anticipation that identical items should be
identically priced, it eases the worries of proxy agents that bid on behalf of
their employers; they do not have to explain why they did not achieve a better
price than their competitors.

The design of {\em mechanisms} for auctioning multiple units of a single good to multi-demand bidders dates back to the seminal work of Vickrey~\cite{Vickrey61}. Since then, three sealed-bid {\em standard} multi-unit auction formats have been identi\-fied in Auction Theory~\cite{Krishna02} [Chapter 12]: the Multi-Unit Vickrey Auction, the Uniform Price Auction, and the Discriminatory Price Auction. A signifi\-cant volume of research in economics has been dedicated to identifying the properties of these standard formats~\cite{Noussair95,Engelbrecht-Wiggans98,Ausubel02,Reny99,Bresky08}. All three auctions have the same (sealed) bidding format and allocation rule, and have been studied mostly for bidders with {\em ``downward sloping''} ({\em symmetric submodular}~\cite{Lehmann06}) valuations; these prescribe that the {\em marginal} value that a bidder has for each additional unit is non-increasing. Therefore each bidder is asked to issue such a non-increasing sequence of {\em marginal} bids for the $k$ available units. The $k$ highest marginal bids win the auction and each winning bid grants its issuing bidder a distinct unit. The Multi-Unit Vickrey Auction charges according to an instantiation of the Clarke payment rule~\cite{Clarke} and is a generalization of Vickrey's celebrated single-item Second-Price mechanism. The Discriminatory Price Auction charges the winning bids as payments and it is a generalization of the First-Price Auction. The Uniform Price Auction, proposed by Friedman \cite{Friedman60}, charges per allocated unit the highest rejected (losing) marginal bid. Among these three formats, the Multi-Unit Vickrey Auction for submodular bidders retains the characteristics of the single-item Second-Price Mechanism, i.e., optimizes the Social Welfare and is truthful (it is a --weakly -- dominant strategy for every bidder to report his marginal values truthfully). Neither the Discriminatory nor the Uniform Price auctions are truthful; they encourage strategic bidding.

One of the downsides of the Uniform Price Auction is the effect of {\em Demand Reduction}, observed in~\cite{Noussair95,Engelbrecht-Wiggans98} and formalized in a general model for multi-unit auctions by Ausubel and Cramton~\cite{Ausubel02}. Bidders may have an incentive to shade their marginal bids for some units, only to win fewer ones in a lower uniform price. This effect leads to diminished revenue and inefficient allocations in equilibrium. In particular, it is known that the socially optimal allocation cannot be generally implemented in an equilibrium in (weakly) {\em undominated strategies}. Despite this effect, the variants of the Uniform Price Auction have seen extensive applications, contrary to the Vickrey auction, which has been largely overlooked in practice; implementations of variants of the standard format are offered by several online brokers~\footnote{Among them, {\tt eBay} ceased its own variant in 2009.}~\cite{Ockenfels06,Kittsteiner07} and are also being used for sales of U.S. Treasury notes to investors since 1992~\cite{treasury}. We also note that despite the Demand Reduction effect, the Uniform Price Auction does retain some interesting characteristics: overbidding any marginal value is a weakly dominated strategy, and so is any misreport of the marginal bid for the {\em first} unit.

In this work we give a detailed account of the properties of undominated pure Nash equilibria for the Uniform Price Auction, when bidders have submodular valuation functions. Subsequently, we study the economic inefficiency of pure Nash and mixed Bayes-Nash equilibria, incurred by the effect of demand reduction in the former case, and by demand reduction and incomplete information in the latter.

\medskip

\noindent {\bf Contribution.} 
We study pure Nash and (mixed) Bayes-Nash equilibria of the Uniform Price Auction. We focus first on bidders with submodular valuation functions and in Section \ref{section:undominated-pne} we give a detailed description of (pure) undominated strategies in the standard model of Uniform Price Auctions. Although these properties are mentioned or partially derived in previous works on Uniform Price Auctions, our analysis aims at clarifying some ambiguity between assumptions and implications. Additionally, we give a characterization of a subset of pure Nash equilibria in undominated strategies, that facilitates our analysis of economic inefficiency later on. 

In Section~\ref{section:pne-poa}, we study the social inefficiency of pure Nash equilibria (PNE) for submodular bidders, in undominated strategies, i.e., the Price of Anarchy (PoA) over the subset of such equilibria. We derive an upper bound of $\frac{e}{e-1}$, which states that the auction recovers in (undominated) PNE at least a fraction $1-e^{-1}$ of the welfare of the socially optimal allocation. We note here that the auction does have a socially optimal equilibrium (discussed in Section~\ref{section:model-definitions}, but not in undominated strategies; undominated PNE are known to be socially inefficient in general. As noted earlier, this is largely due to the effect of {\em Demand Reduction}~\cite{Ausubel02}, whereby a bidder shades his bids for additional units, so as to pay a lower price for the units he wins. Our analysis can thus be viewed as a quantification of this effect. For any number of units $k\geq 9$, we provide an almost matching lower bound, equal to $\left(1-e^{-1}+\frac{2}{k}\right)^{-1}$. We also discuss the social inefficiency of the auction for $k<9$ units. 

In Section~\ref{section:bne-poa}, we consider (mixed) Bayes-Nash equilibria in the {\em incomplete information} model of Harsanyi. For mixed Bayes-Nash equilibria that emerge from randomized bidding strategy profiles containing only undominated pure strategies in their support, we prove an upper bound of $4-\frac{2}{k}$ on the Price of Anarchy\footnote{This improves over a bound of $O(\log k)$ from~\cite{Markakis12} and over a bound of $\frac{4e}{e-1}$, shown subsequently in~\cite{Syrgkanis13}.}.

\section{Related Work}
\label{section:related-work}

\noindent{\bf Mult-Unit Auctions.} The Uniform Price Auction has received significant attention within the economics community. Noussair~\cite{Noussair95} and Engelbrecht-Wiggans and Kahn~\cite{Engelbrecht-Wiggans98} gave characterizations of pure Bayes-Nash equilibria under the model of independent private values of bidders, drawn from continuous distributions. They also made some initial observations on the effect of demand reduction. Ausubel and Cramton formalized demand reduction for a more general model of multi-unit auctions in~\cite{Ausubel02}, that allows also interdependent private values. Bresky showed in~\cite{Bresky08} existence of pure Bayes-Nash equilibria in the independent private values model (with continuous valuation distributions) for a large class of multi-unit auctions, including all three standard formats.

\smallskip

\noindent{\bf Simultaneous Auctions.} There has been a growing recent interest in the computer science community in analyzing auction schemes that -- although not necessarily truthful -- have an appealing simplicity and appear to achieve increased economic efficiency in equilibrium, compared to what is achievable with truthful mechanisms~\cite{Christodoulou08,Bhawalkar11,Hassidim11,Fu12,Syrgkanis12a}. Our work presents conceptual and technical resemblance to these works. 

Christodoulou, Kov\'acs and Schapira initialized in~\cite{Christodoulou08} the study of {\em Simultaneous Auctions} for bidders with combinatorial demands, where they proposed that, a set of distinct goods is sold by having an independent single-good Second-Price auction for each of them in parallel. For bidders with {\em fractionally subadditive} valuation functions (see~\cite{Feige09} for a definition), they prove a tight bound of $2$ for the Price of Anarchy, even for (mixed) Bayes-Nash equilibria. They show how pure Nash equilibria can be computed efficiently for submodular valuation functions and describe an iterative best response algorithm that converges to a pure Nash equilibrium for fractionally subadditive ones. Bhawalkar and Roughgarden proved in~\cite{Bhawalkar11} an upper bound of $2$ on the Price of Anarchy of pure Nash equilibria, for bidders with subadditive valuation functions. Feldman {\em et al.} proved for this same setting an upper bound of $4$ on the Price of Anarchy of mixed Bayes-Nash equilibria (thus, improving on a previous bound of $O(\log m)$ from~\cite{Bhawalkar11}). Finally, Fu, Kleinberg and Lavi~\cite{Fu12} showed recently a tight 
upper bound of $2$ on the Price of Anarchy of pure Nash equilibria (when they exist), for arbitrary valuation functions.

Hassidim {\em et al.}~\cite{Hassidim11} considered simultaneous First-Price auctions; first, they showed that pure Nash equilibria are efficient, when they exist, for arbitrary valuation functions. For fractionally subadditive valuation functions they proved upper bounds of $2$ and $4$ for the pure and mixed Bayes-Nash Price of Anarchy respectively. For subadditive and arbitrary valuation functions their corresponding upper bounds are $O(\log m)$ and $O(m)$. Feldman {\em et al.}~\cite{Feldman13} and Syrgkanis~\cite{Syrgkanis12c} improved the upper bound of the mixed Bayes-Nash Price of Anarchy to respectively: $2$, for subadditive valuation functions and $\frac{e}{e-1}$, for fractionally subadditive ones.

\smallskip

\noindent {\bf Sequential and Greedy Auctions.} Very recently, Syrgkanis and Tardos studied in~\cite{Syrgkanis12a} sequential First- and Second-Price auctions, motivated by the practical issue that supply may not be readily available at once. They showed that sequential First-Price auctions are efficient in subgame-perfect equilibrium. In~\cite{Syrgkanis12b} they extended their results in the incomplete information setting. Lucier and Borodin~\cite{Lucier10} analyzed the social inefficiency at (mixed) Bayes-Nash equilibrium, of combinatorial auctions for multiple distinct goods, incorporating {\em Greedy} allocation algorithms (and using appropriate adaptations of ``first'' and ``second'' pricing rules). They showed that these auctions have Price of Anarchy fairly comparable to the approximation factors of the greedy allocation algorithms, for the underlying welfare optimization problem.

\smallskip

\noindent{\bf Truthful Mechanism Design.} From the mechanism design perspective, Vickrey designed in~\cite{Vickrey61} the first truthful mechanism for auctioning multiple units ``in one go'', so as to maximize the social welfare. The Vickrey multi-unit auction is computationally efficient for a bounded number of units. Since then, computationally efficient truthful approximation mechanisms for multi-demand bidders were given by Mu'alem and Nisan in~\cite{Mualem08} and by Dobzinski and Nisan in~\cite{Dobzinski10}. These works considered several different classes of valuation functions, much more general than symmetric submodular ones. Very recently, V\"ocking gave a randomized universally truthful polynomial-time approximation scheme for bidders with general valuations~\cite{Voecking12} (a universally truthful mechanism is a probability distribution over deterministic truthful mechanisms), thus almost closing the problem. It is worth noting that in these works, the bids are accessed by the allocation algorithms through polynomially bounded many {\em value queries} to the bidders, for specific bundles of items (with the exception being the case of $k$-minded bidders, who have non-zero value for at most $k$ sizes of bundles).

\section{Definitions and Preliminaries}
\label{section:model-definitions}

We consider auctioning $k$ units of a single good to a set of $n$ bidders, denoted by $[n]$. Every bidder $i\in[n]$ has a {\em private} valuation function, defined over the quantity of units he receives, i.e., $v_i:(\{0\}\cup [k])\mapsto\Re^+$, $i=1,\dots,n$, where $v_i(0) = 0$ and each $v_i$ is {\em non-decreasing}. In this work we consider submodular 
valuation functions:
 
\begin{definition}
A valuation function $f:(\{0\}\cup [k])\mapsto\Re^+$ is called 
{\bf submodular}, if for every $x<y$, $f(x)-f(x-1)\geq f(y)-f(y-1)$.
\end{definition}

\begin{proposition}
Given $x, y\in [k]$ with $x\leq y$, any submodular valuation function $f$ satisfies $f(x)/x \geq f(y)/y$.
\label{proposition:valuation-functions}
\end{proposition}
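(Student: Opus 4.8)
The plan is to prove the statement $f(x)/x \geq f(y)/y$ for $x \leq y$ by writing each of $f(x)$ and $f(y)$ as telescoping sums of marginal values and then comparing averages. First I would set $m_j := f(j) - f(j-1)$ for $j \in [k]$, so that submodularity says exactly that the sequence $(m_j)_{j\in[k]}$ is non-increasing, and that $f(x) = \sum_{j=1}^{x} m_j$ (using $f(0)=0$). Thus $f(x)/x$ is the average of the first $x$ marginals and $f(y)/y$ is the average of the first $y$ marginals.

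The key step is then the elementary fact that prepending or, equivalently, that taking the average over a longer prefix of a non-increasing sequence cannot increase the average: since $m_{x+1}, \dots, m_y$ are each $\leq m_x \leq$ (average of $m_1,\dots,m_x$), appending them to the prefix $m_1,\dots,m_x$ yields an average no larger than $f(x)/x$. Formally, one can argue by induction on $y - x$: it suffices to show $f(x)/x \geq f(x+1)/(x+1)$, i.e. $(x+1) f(x) \geq x f(x+1) = x f(x) + x\, m_{x+1}$, which reduces to $f(x) \geq x\, m_{x+1}$; and this holds because $f(x) = \sum_{j=1}^{x} m_j \geq \sum_{j=1}^{x} m_{x+1} = x\, m_{x+1}$, using $m_j \geq m_{x+1}$ for all $j \leq x$ by submodularity. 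Iterating this one-step inequality from $x$ up to $y$ gives the claim.

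I do not anticipate a genuine obstacle here — the statement is a standard ``average of a prefix of a decreasing sequence is decreasing in the prefix length'' argument, and the only thing to be careful about is the edge behaviour: the definition of submodularity in the excerpt is stated for $x < y$ with the inequality $f(x) - f(x-1) \geq f(y) - f(y-1)$, so I should note explicitly that this gives $m_j \geq m_\ell$ whenever $j \leq \ell$, and that $f(0) = 0$ is what makes the telescoping sum start cleanly at $j = 1$. The case $x = y$ is trivial, so the induction only needs the base step $y = x+1$ worked out as above.
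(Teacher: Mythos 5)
Your proof is correct: the telescoping decomposition into marginals, the reduction of the one-step inequality $f(x)/x \geq f(x+1)/(x+1)$ to $f(x) \geq x\,m_{x+1}$, and the use of monotonicity of the marginals are all sound, and the edge cases ($x=y$, $f(0)=0$) are handled. The paper itself states Proposition~\ref{proposition:valuation-functions} without proof, treating it as a standard fact, so there is no authorial argument to compare against; your averaging argument is exactly the standard one the authors implicitly rely on.
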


Any non-decreasing valuation function $v:(\{0\}\cup [k])\mapsto\mathbb{R}^+$ with $v(0)=0$ can be specified also as a vector of {\em marginal values} $(m(1),m(2),\dots,m(k))$, where $m(j) = v(j) - v(j-1)$. If $v$ is submodular, then $m(1)\geq m(2)\geq\cdots\geq m(k)$. We write $m_i(\cdot)$ for the marginal value function of bidder $i$.

\paragraph{The Uniform Price Auction}
The standard Uniform Price Auction format requires that each bidder $i$ declares his whole valuation curve, by submitting a vector $\vec{b}_i$ of {\em marginal bids}, $\vec{b}_i=(b_i(1), b_i(2),\dots,b_i(k))$, satisfying $b_i(1)\geq b_i(2)\geq\cdots\geq b_i(k)$. 
Here $b_i(j)$ is the declared marginal value of $i$, for obtaining the $j$-th unit of the good. Given a bidding configuration $\vec{b}=(\vec{b}_1,\dots,\vec{b}_n)$, the allocation algorithm produces an allocation $\vec{x}(\vec{b})=(x_1(\vec{b}), x_2(\vec{b}),\dots,x_n(\vec{b}))$, as follows: each of the $k$ highest issued marginal bids grants a unit to its issuing bidder. After the allocation is completed, every bidder $i$ pays a uniform price $p(\vec{b})$ per received unit, which equals the highest rejected (marginal) bid. That is, if under bidding configuration $\vec{b}$, bidder $i$ is allocated $x_i(\vec{b})$ units in total and the uniform price is $p(\vec{b})$, $i$ pays a total of $x_i(\vec{b})\cdot p(\vec{b})$. The utility that $i$ derives under $\vec{b}$ is $u_i(\vec{b})=v_i(x_i(\vec{b}))-x_i(\vec{b})\cdot p(\vec{b})$. Given a bidding profile $\vec{b}$, we will denote by $\beta_j(\vec{b})$ the $j$-th lowest winning (maginal) bid, for $j=1,\dots,k$, i.e., the lowest winning bid is $\beta_1(\vec{b})$ and the highest one is $\beta_k(\vec{b})$.

For submodular bidders, the Uniform Price Auction admits an efficient pure Nash equilibrium; let $\vec{x}^* = (x_1^*, ...,x_n^*)$ be an optimal allocation\footnote{It is known that for submodular valuation functions on identical units, the allocation algorithm of the Uniform Price Auction produces an optimal allocation when bidders bid truthfully. This property does not hold in the case of non-identical items, where only a $2$-approximation is achieved~\cite{Lehmann06}.} of units to the bidders. Consider the profile where any winner $i$ of at least one unit in $\vec{x}^*$ bids $\vec{b}_i = (m_i(1),...,m_i(x_i^*), 0,...,0)$ and any loser bids the zero vector. It is straightforward to verify that this a Nash equilibrium. However, the strategies of losing bidders in this profile are weakly dominated, as we shall see.

Pure Nash equilibria in undominated strategies are known to suffer from a {\em demand reduction} effect~\cite{Ausubel02}; bidders may have an incentive to understate their marginal increase for the $j$-th unit onwards, for some $j>1$. This implies that equilibria in undominated strategies are generally inefficient. We show that, despite this effect, the Uniform Price Auction does quite well in approximating the optimal Social Welfare.

\medskip

\noindent {\bf Incomplete Information.} In Section~\ref{section:bne-poa} we will move to an incomplete information setting, where each bidder faces uncertainty over the other bidders' valuation functions. In particular, we will assume that every bidder $i\in[n]$ obtains his type/valuation function from a finite set $V_i$ of valuation functions, through a discrete probability distribution $\pi_i: V_i\mapsto [0,1]$, independently of the rest of the biddders; for any particular $v\in V_i$ we write $v\sim\pi_i$ to signify that it is drawn randomly from distribution $\pi_i$. The valuation function of every bidder is {\em private}. A valuation profile $\vec{v}=(v_1,\dots,v_n)\in{\cal V}=\times_i V_i$ is drawn from a {\em publicly known distribution} $\pi=\times_i\pi_i$, $\pi:{\cal V}\mapsto [0,1]$. We write accordingly $\vec{v}\sim\pi$.

Every bidder $i$ knows his own valuation function $v_i$ -- drawn from $V_i$ according to $\pi_i$, but does not know the valuation function $v_{i'}$ drawn by any other bidder $i'\neq i$. Bidder $i$ may only use his knowledge of $\pi$ to estimate $\vec{v}_{-i}$. Given the publicly known distribution $\pi$, the (possibly mixed) strategy of every bidder is a function of his own valuation $v_i$, denoted by $B_i(v_i)$. $B_i$ maps a valuation function $v_i\in V_i$ to a {\em distribution} $B_i(v_i)=B_i^{v_i}$, over all possible bid vectors (strategies) for $i$. In this case we will write $\vec{b}_i\sim B_i^{v_i}$, for any particular bid vector $\vec{b}_i$ drawn from this distribution. We also use the notation $\vec{B}_{-i}^{\vec{v}_{-i}}$, to refer to the vector of randomized strategies of bidders other than $i$, under valuation profile $\vec{v}_{-i}$ for these bidders. A {\it Bayes-Nash equilibrium} (BNE) is a strategy profile $\vec{B}=(B_1,\dots, B_n)$ such that, for every bidder $i$ and for every valuation $v_i$, $B_i(v_i)$ maximizes the utility of $i$ in expectation, over the distribution of the other bidders' valuation functions $\vec{w}_{-i}$ {\em given $v_i$} and over the distribution of $i$'s own and the other bidders' strategies, $\vec{B}^{(v_i,\vec{w}_{-i})}$. That is, for every pure strategy $\vec{c}_i$ of $i$:

\[
\mathbb{E}_{\vec{w}_{-i}|v_i}\Biggl[
\mathbb{E}_{\vec{b}\sim\vec{B}^{(v_i,\vec{w}_{-i})}}\Bigl[
u_i(\vec{b})
\Bigr]\Biggr]\geq 
\mathbb{E}_{\vec{w}_{-i}|v_i}\Biggl[
\mathbb{E}_{\vec{b}_{-i}\sim\vec{B}^{\vec{w}_{-i}}}\Bigl[
u_i(\vec{c}_i,\vec{b}_{-i})
\Bigr]\Biggr]
\]

\noindent where we use the notation $\mathbb{E}_{\vec{v}}$ and $\mathbb{E}_{\vec{w}_{-i}|v_i}$ to denote the expectation over the distribution $\pi$ and over $\pi(\cdot|v_i)$ respectively, i.e., {\em given} $v_i$ (instead of using $\mathbb{E}_{\vec{v}\sim\pi}$ and $\mathbb{E}_{\vec{w}_{-i}\sim\pi(\cdot|v_i)}$, since the analysis is always in the context of $\pi$ and $\pi_i$).

\noindent Fix a valuation profile $\vec{v}\in{\cal V}$ and consider a (mixed) bidding configuration $\vec{B}^{\vec{v}}$, under $\vec{v}$. The Social Welfare $SW(\vec{B}^{\vec{v}})$ under $\vec{B}^{\vec{v}}$ is defined in expectation over the bidding profiles chosen by the bidders collectively, from their randomized strategies:

\[
SW(\vec{B}^{\vec{v}})=\mathbb{E}_{\vec{b}\sim\vec{B}^{\vec{v}}}\left[\sum_iv_i(x_i(\vec{b}))\right]
\]

\noindent The {\em expected} Social Welfare in {\em Bayes-Nash Equilibrium} is $\mathbb{E}_{\vec{v}\sim\pi}[SW(\vec{B}^{\vec{v}})]$. The socially optimal assignment under valuation profile $\vec{v}\in{\cal V}$ will be denoted by $\vec{x}^{\vec{v}}$. The {\em expected} optimal social welfare is then $\mathbb{E}_{\vec{v}}[SW(\vec{x}^{\vec{v}})]$, by slight abuse of notation. Under these definitions, we will be studying the {\em Bayes-Nash Price of Anarchy}, i.e., the worst case ratio
$\mathbb{E}_{\vec{v}}[SW(\vec{x}^{\vec{v}})] /
\mathbb{E}_{\vec{v}}[SW(\vec{B}^{\vec{v}})]$ over all possible distributions $\pi$ and Bayes-Nash equilibria $\vec{B}$.

As in previous works~\cite{Christodoulou08,Feldman13}, we ensure existence of Bayes-Nash equilibria in our auction format by assuming that bidders have bounded and finite strategy spaces, e.g., derived through discretization. Our bounds on the Bayesian inefficiency hold for sufficiently fine discretizations (see also the discussion in Appendix D of~\cite{Feldman13}).

\section{Undominated Equilibria}
\label{section:undominated-pne}

In this work we consider only bidders with submodular valuation functions, so that $m_i(1)\geq ...\geq m_i(k)$, for every bidder $i$. As already explained in Section~\ref{section:model-definitions}, the bidding interface of the Uniform Price Auction requires that each bidder submits a sequence of non-increasing marginal bids $\vec{b}_i=(b_i(1),\dots,b_i(k))$, with $b_i(1)\geq b_i(2)\geq\cdots\geq b_i(k)$ (see e.g., the related chapters in~\cite{Krishna02} and~\cite{Milgrom04}). By the auction's definition, it follows that, under any strategy profile $\vec{b}$, the uniform price $p(\vec{b})$ never exceeds any of the winning (marginal) bids.

Lemmas~\ref{lm:no-overbidding} and~\ref{lm:vi(1)} below state two well known facts about the Uniform Price Auction with submodular bidders (see e.g.,~\cite{Krishna02,Milgrom04}). For the sake of clarity and completeness, we state them and prove them here to clarify some ambiguities and emphasize their dependence on the requirement that bidders issue non-increasing marginal bids. Their proofs are provided in {\bf Appendix A}.

\begin{lemma}
For bidders with submodular valuation functions, and for any $j\in [k]$, it is a weakly dominated strategy to declare a bid $b_i(j)$ with $b_i(j) > m_i(j)$.
\label{lm:no-overbidding}
\end{lemma}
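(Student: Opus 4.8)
The plan is to show that over-declaring on the $j$-th unit can only hurt bidder $i$, regardless of what the others do. Fix the bids $\vec{b}_{-i}$ of the remaining bidders, and fix all coordinates of $i$'s bid except (possibly) a block of high coordinates that sit above $m_i(j)$; the cleanest argument is to compare $i$'s true-ish bid vector $\vec{b}_i$ against the vector $\vec{b}_i'$ obtained by lowering every coordinate that exceeds $m_i(j)$ down to $m_i(j)$ (which keeps the sequence non-increasing and only decreases coordinates $1,\dots,j$). It suffices to show $u_i(\vec{b}_i',\vec{b}_{-i})\ge u_i(\vec{b}_i,\vec{b}_{-i})$ for every $\vec{b}_{-i}$, and then exhibit one profile $\vec{b}_{-i}$ where the inequality is strict, so that the over-bidding strategy is weakly dominated.

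**The case analysis on what over-bidding buys.** I would compare the allocation and price that $i$ faces under $\vec{b}_i$ versus $\vec{b}_i'$. Lowering coordinates $1,\dots,j$ to $m_i(j)$ can reduce the number of units $i$ wins only by making some of those formerly-winning marginal bids now lose. Let $t$ be the number of units $i$ wins under $\vec{b}_i'$; then $i$ wins $t' \ge t$ units under $\vec{b}_i$, and the extra $t'-t$ units are exactly those won by marginal bids in the range $(m_i(j), \cdot]$ on coordinates $\le j$. The key observation is that each such extra unit is a unit whose index is at most $j$, so its true marginal value to $i$ is at most $m_i(j)$ (since $v_i$ is submodular, $m_i(\ell)\le m_i(j)$ only when $\ell\ge j$ — so I need to be a little careful: the extra units have small index, hence \emph{large} marginal value; the subtlety is that $i$ still pays the uniform price on \emph{all} units, and the price is governed by the highest rejected bid). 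The honest way to organize it: under $\vec{b}_i'$ all of $i$'s winning bids are $\ge m_i(j)$-worth-of-value units in the sense that $i$ wins a prefix of his units, and the uniform price $p' \le m_i(j)$ because $b_i'$ contributes no winning bid above $m_i(j)$ in the critical position; whereas under $\vec{b}_i$ the price $p$ may be strictly larger and is paid on all $t'$ units. One shows $v_i(t') - t'p \le v_i(t) - t p'$ by splitting into (a) $p \le m_i(j)$, where winning more units at a price below marginal value is fine and in fact $\vec b_i$ is also not harmful — here one checks directly that the two utilities compare correctly, and (b) $p > m_i(j)$, the genuinely dangerous case, where the marginal value $v_i(t') - v_i(t) \le (t'-t)\,m_i(j) < (t'-t)\,p$ by submodularity (Proposition-style telescoping of marginals, each $\le m_i(j)$ since the indices $t+1,\dots,t'$ are all $\le j$... actually all $> $ something — the precise bookkeeping of which unit-indices are in play is the step to be careful with), so the extra units are bought for more than they are worth, and moreover raising the price hurts the units $i$ was already winning.

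**The strict instance.** For weak domination I then need a single $\vec{b}_{-i}$ where $\vec b_i'$ strictly beats $\vec b_i$. Take a profile of the other bidders' bids engineered so that bidder $i$'s over-bid on coordinate $j$ is the pivotal highest-rejected-or-barely-winning bid: concretely, let some other bidder bid just below $b_i(j)$ but above $m_i(j)$ on enough units, so that under $\vec b_i$ bidder $i$ wins the $j$-th unit at uniform price equal to that competing bid (strictly greater than $m_i(j)$), making $i$'s marginal utility from the $j$-th unit negative, while under $\vec b_i'$ bidder $i$ drops that unit and the price falls. This yields a strict utility gain, completing the argument.

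**Main obstacle.** The delicate point is the bookkeeping in the case $p > m_i(j)$: correctly identifying \emph{which} units $i$ gains by over-bidding, arguing their true marginal values are bounded by $m_i(j)$, and simultaneously tracking that the uniform price weakly increases on the units $i$ keeps — all while respecting the non-increasing-bid constraint, which is what forces the comparison vector $\vec b_i'$ to flatten an entire prefix rather than a single coordinate. Handling the possible ties in the allocation rule (when a marginal bid exactly equals the price) needs a fixed but consistent tie-breaking convention; I would state that convention up front and note the argument is insensitive to it.
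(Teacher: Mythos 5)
Your choice of the dominating vector is where the argument breaks. You compare $\vec{b}_i$ against the vector obtained by lowering \emph{every} coordinate exceeding $m_i(j)$ down to $m_i(j)$, i.e.\ you flatten the whole prefix $1,\dots,j$ (and possibly more). But the coordinates $r<j$ need not be overbids at all: bidder $i$ may be bidding truthfully there, with $b_i(r)=m_i(r)>m_i(j)$, and those high bids are what win him valuable early units. Flattening them can cost him such a unit, so the claimed inequality $u_i(\vec{b}_i',\vec{b}_{-i})\geq u_i(\vec{b}_i,\vec{b}_{-i})$ for all $\vec{b}_{-i}$ is false. Concretely, take $k=2$, $m_i(1)=10$, $m_i(2)=1$, $\vec{b}_i=(10,2)$ (an overbid only on the second unit), and a single opponent bidding $(8,8)$. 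Under $\vec{b}_i$ bidder $i$ wins one unit at uniform price $8$ and gets utility $2$; under your $\vec{b}_i'=(1,1)$ he wins nothing and gets $0$. This is exactly the bookkeeping issue you flag yourself: the units lost by flattening can have index strictly less than $j$, hence marginal value strictly greater than $m_i(j)$, so the telescoping bound $v_i(t')-v_i(t)\leq (t'-t)\,m_i(j)$ you invoke in case (b) fails, and no tie-breaking convention repairs it.

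The fix --- and the paper's route --- is to leave the first $j-1$ coordinates untouched and modify only the tail: set $b_i'(r)=b_i(r)$ for $r\leq j-1$ and $b_i'(r)=m_i(r)$ for $r\geq j$. This is still a valid non-increasing bid vector, because $b_i'(j-1)=b_i(j-1)\geq b_i(j)>m_i(j)\geq m_i(j+1)\geq\cdots$; so the monotonicity constraint does not force you to flatten the prefix, contrary to what you assert. With this $\vec{b}_i'$ the comparison goes through: when $j>x_i(\vec{b})$ the modified coordinates were losing anyway and the price can only drop; when $j\leq x_i(\vec{b})$ one splits on $p(\vec{b})$ versus $m_i(x_i(\vec{b}))$ (not versus $m_i(j)$), where the tie case $p(\vec{b})=m_i(x_i(\vec{b}))$ costs only zero-utility units and the case $p(\vec{b})>m_i(x_i(\vec{b}))$ yields the strict improvement needed for weak dominance. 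Your idea for the strictness instance (an opponent wedged between $m_i(j)$ and $b_i(j)$) is fine in spirit, but it cannot rescue the domination claim itself under your choice of $\vec{b}_i'$.
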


An assumption that is recently being used in various other auction formats is that bidders do not overbid their value for a set of goods (e.g., \cite{Bhawalkar11,CKK+12,Christodoulou08,Leme10}). The justification for this is that such strategies may be dominated by other strategies and hence should be avoided. In our context, this would mean that for any {\em number} of $r$ units, $\sum_{j=1}^r b_i(j) \leq v_i(r) = \sum_{j=1}^r m_i(j)$. We note here that Lemma \ref{lm:no-overbidding} shows that a weakly undominated strategy in our setting implies a stricter notion of conservative behavior than the usual ``no-overbidding'' assumption of the literature. To distinguish from the usual no-overbidding assumption, we call a bidder $i$ who does not bid beyond $m_i(j)$ for any $j\in[k]$ {\it conservative in marginal values}.

\begin{lemma}
In an undominated strategy, a bidder with a submodular valuation, never declares a bid $b_i(1)\neq v_i(1)$.
\label{lm:vi(1)}
\end{lemma}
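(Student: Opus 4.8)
The plan is to show that both deviating options — declaring $b_i(1) > m_i(1)$ or $b_i(1) < m_i(1)$ — are weakly dominated by the alternative strategy in which $i$ truthfully sets $b_i(1) = m_i(1) = v_i(1)$ while keeping the remaining marginal bids $b_i(2),\dots,b_i(k)$ unchanged (capping them at $m_i(1)$ if the monotonicity constraint $b_i(1)\geq b_i(2)$ would otherwise be violated — though since all undominated marginals satisfy $b_i(j)\le m_i(j)\le m_i(1)$ by Lemma~\ref{lm:no-overbidding}, this capping is only needed when handling the overbidding case and is harmless there). The case $b_i(1) > m_i(1)$ is already covered by Lemma~\ref{lm:no-overbidding} applied with $j=1$, so the real content is ruling out underbidding, $b_i(1) < m_i(1)$.

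So fix a bidder $i$ with $b_i(1) < m_i(1)$ and let $\vec{b}_i'$ be the strategy that agrees with $\vec{b}_i$ on coordinates $2,\dots,k$ but has $b_i'(1) = m_i(1)$. I would argue that for every fixed profile $\vec{b}_{-i}$ of the other bidders, $u_i(\vec{b}_i',\vec{b}_{-i}) \geq u_i(\vec{b}_i,\vec{b}_{-i})$, with strict inequality for some $\vec{b}_{-i}$. Raising only the first marginal bid can only (weakly) increase the number of units $i$ wins; write $x = x_i(\vec{b}_i,\vec{b}_{-i})$ and $x' = x_i(\vec{b}_i',\vec{b}_{-i}) \geq x$. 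Two subcases: (i) if $x' = x$, then the set of winning bids is unchanged except possibly the identity of which of $i$'s bids are winning — but since $b_i(1)$ was $i$'s largest bid originally and remains so, and the raise did not change whether it wins... here I need to be slightly careful. The clean statement is that raising $i$'s top bid from $b_i(1)$ to $m_i(1)$ weakly increases $x_i$ and, crucially, for each additional unit won beyond $x$, the uniform price paid is at most the marginal value of that unit. (ii) If $x' > x$, then $i$ wins $x' - x$ extra units; the new uniform price $p' = p(\vec{b}_i',\vec{b}_{-i})$ equals the highest rejected bid under the new profile, which is at most $i$'s own bid $b_i(x'+1) = b_i'(x'+1) \leq m_i(x'+1) \leq m_i(j)$ for every $j \le x'$ by submodularity; also $p' \le p = p(\vec{b}_i,\vec{b}_{-i})$ since raising one bid can only push rejected bids down. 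Hence $v_i(x') - x' p' = \sum_{j=1}^{x'} (m_i(j) - p') \ge \sum_{j=1}^{x}(m_i(j) - p) = v_i(x) - x p$, because the first $x$ terms increase (price drops from $p$ to $p'$) and the remaining $x' - x$ terms are nonnegative. This gives $u_i(\vec{b}_i',\vec{b}_{-i}) \ge u_i(\vec{b}_i,\vec{b}_{-i})$.

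For strictness I exhibit one profile $\vec{b}_{-i}$ where $\vec{b}_i'$ strictly wins: pick the other bidders' bids so that there is exactly one unit whose "price" sits strictly between $b_i(1)$ and $m_i(1)$ — e.g., $k-1$ bidders each bidding a tiny $\epsilon$ on their first unit and one "blocking" bidder bidding a value $q$ with $b_i(1) < q < m_i(1)$ on a single unit. Under $\vec{b}_i$, bidder $i$ loses that contested unit (winning $0$ units, utility $0$); under $\vec{b}_i'$ he wins it at uniform price $q < m_i(1)$, for utility $m_i(1) - q > 0$. Thus $\vec{b}_i'$ weakly dominates, and since $b_i(1) = v_i(1)$ for the dominating strategy (as $v_i(1) = m_i(1)$), any bid $b_i(1) \neq v_i(1)$ is weakly dominated — combining the underbidding argument here with Lemma~\ref{lm:no-overbidding} for the overbidding side. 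The main obstacle I anticipate is case (i) above: verifying that when $x' = x$ the utility genuinely does not drop requires tracking exactly which of $i$'s bids are "winning" before and after the raise and confirming the uniform price is monotone; the monotonicity of $p$ under raising a single bid is intuitively clear but needs a careful argument from the definition of the highest-rejected-bid rule, especially handling ties at the threshold.
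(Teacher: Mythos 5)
Your overall plan is the same as the paper's: overbidding $b_i(1)>m_i(1)$ is already excluded by Lemma~\ref{lm:no-overbidding}, and underbidding is handled by showing that $\vec{b}_i$ is weakly dominated by the strategy $\vec{b}_i'$ that raises only the first coordinate to $m_i(1)=v_i(1)$, plus an explicit profile giving a strict gain. However, the step you yourself flag as the main obstacle is a genuine gap, and the tool you propose to close it with is false. Raising a single bid weakly \emph{increases} every order statistic of the bid multiset, so the uniform price (the highest rejected, i.e.\ $(k+1)$-st highest, bid) satisfies $p'\geq p$, not $p'\leq p$; likewise, in your case (ii) the highest rejected bid is at \emph{least} $b_i'(x'+1)$ (that bid is itself rejected), not at most. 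Hence the assertion that ``the first $x$ terms increase because the price drops from $p$ to $p'$'' is unjustified, and an argument resting on downward price monotonicity cannot resolve case (i).

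The correct and much simpler resolution, which is what the paper does, is this: since $b_i(1)$ is $i$'s largest bid, $x_i(\vec{b}_i,\vec{b}_{-i})\geq 1$ implies $b_i(1)$ is a winning bid; raising a winning bid leaves the multiset of rejected bids, hence the price and the allocation, completely unchanged, so $i$'s utility is unchanged (the price may well rise in other subcases, but then $i$ wins nothing and his utility is $0$ both before and after). If $x_i(\vec{b}_i,\vec{b}_{-i})=0$, then $i$'s utility under $\vec{b}_i$ is $0$, while under $\vec{b}_i'$ it is either still $0$ or, if $m_i(1)$ now wins a unit, equal to $m_i(1)-p'$ where $p'$ cannot exceed the winning bid $m_i(1)$, so the change is weakly beneficial; choosing the other bidders' positive bids strictly between $b_i(1)$ and $m_i(1)$ gives the strict gain. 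For the same reason, $x'>x$ forces $x=0$, $x'=1$, which collapses your case (ii); note also that in your strictness example the price paid is the highest rejected bid (an $\epsilon$), not $q$, and the $\epsilon$'s must exceed $b_i(1)$ for $i$ to lose under $\vec{b}_i$. So your conclusion and skeleton match the paper, but the no-loss direction as you argue it does not go through and needs the repair above.
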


We now give a characterization of a subset of undominated pure Nash equilibria, which will be utilized for the analysis of their social inefficiency in Section~\ref{section:pne-poa}.

\begin{lemma}
Let $\vec{b}$ be a pure Nash equilibrium strategy profile of the Uniform Price
Auction in undominated strategies for submodular bidders, with
uniform price $p(\vec{b})$. There always exists a pure Nash equilibrium
$\vec{b}'$ in undominated strategies, satisfying $\vec{x}(\vec{b}') =
\vec{x}(\vec{b})$ and:
\begin{enumerate}
\item $b_i'(x) = m_i(x)$, for every bidder $i$ and every $x\leq x_i(\vec{b})$.
\item $p(\vec{b}')\leq p(\vec{b})$ and $p(\vec{b}')$ is either $0$ or equal to
$v_i(1)$ for some bidder $i$.
\end{enumerate}
\label{lemma:undominated-pne}
\end{lemma}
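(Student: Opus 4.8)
### Proof Proposal

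The plan is to transform the given equilibrium $\vec{b}$ into $\vec{b}'$ in two stages, keeping the allocation fixed throughout. First I would address item 1: for each bidder $i$ who wins $x_i(\vec{b})$ units, I want to raise (or lower) the winning marginal bids $b_i(1),\dots,b_i(x_i(\vec{b}))$ to the true marginals $m_i(1),\dots,m_i(x_i(\vec{b}))$, while leaving the losing marginal bids $b_i(x_i(\vec{b})+1),\dots,b_i(k)$ essentially untouched. The key observations that make this safe are: (a) by Lemma~\ref{lm:no-overbidding} each $b_i(j)\le m_i(j)$, so we are only ever raising winning bids; (b) since $\vec{b}$ is an undominated PNE, the $x_i(\vec{b})$-th lowest bid of $i$ is a winning bid and the price $p(\vec{b})$ does not exceed any winning bid, and submodularity $m_i(1)\ge\cdots\ge m_i(x_i(\vec{b}))$ together with the fact that $b_i(x_i(\vec{b}))\ge p(\vec{b})$ — wait, more carefully: we need the new bids to still be sorted, i.e. $m_i(x_i(\vec{b}))$ must dominate the first losing bid $b_i(x_i(\vec{b})+1)$. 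This may fail, so the clean way is to also lower the losing marginal bids of $i$ to $\min\{b_i(j), m_i(x_i(\vec b))\}$, or simply to $0$ for $j > x_i(\vec b)$, if that does not disturb the allocation — which it does not, because those bids were losing anyway. Raising winning bids can only help bidder $i$ win his units and cannot change which bids are among the top $k$, because the set of winners is unchanged; and lowering losing bids of $i$ also cannot change the top-$k$ set. So $\vec{x}(\vec{b}')=\vec{x}(\vec{b})$, and every winning bid is now $b_i'(x)=m_i(x)$.

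Next I would verify that $\vec{b}'$ is still a PNE in undominated strategies. Undominatedness is immediate: $\vec{b}'$ is conservative in marginal values (winning bids equal true marginals, losing bids were lowered), and by Lemma~\ref{lm:vi(1)} we need $b_i'(1)=v_i(1)$; for winners this holds since $b_i'(1)=m_i(1)=v_i(1)$, and for losers with $x_i(\vec b)=0$ we must be a little careful — a loser's vector after zeroing is the all-zero vector, and we should check (or cite from the undominated-strategies discussion / Appendix A) that bidding the zero vector is undominated when it is a best response and $v_i(1)$ equals the competition; if not, one sets $b_i'(1)=v_i(1)$ and the rest to $0$, which is undominated and still loses. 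For the Nash property: bidder $i$'s utility under $\vec{b}'$ equals $v_i(x_i(\vec b)) - x_i(\vec b)\cdot p(\vec b')$, and since $p(\vec b')\le p(\vec b)$ (shown below) this is at least his utility under $\vec b$; meanwhile any deviation available to $i$ faces the same or a weakly lower price ceiling from the others, so no profitable deviation can appear. More precisely, the opponents' bids $\vec b'_{-i}$ are componentwise $\le \vec b_{-i}$ except for opponents' winning bids which were raised — but raising opponents' winning bids only makes it harder, not easier, for $i$ to profitably grab extra units, and the price $i$ would pay on any fixed number of units is determined by the highest rejected bid, which does not increase when opponents' losing bids are lowered. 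This monotonicity argument is the step I expect to need the most care, since one must rule out that some opponent's raised winning bid now becomes the relevant "highest rejected bid" for a deviation of $i$ — but that bid was already winning under $\vec b$, hence $\ge p(\vec b)$, and it only went up, so it cannot lower $i$'s deviation price either.

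Finally, for item 2, the price $p(\vec b')$ is the highest rejected marginal bid under $\vec b'$. All winning bids are unchanged as a set of positions, and the losing bids have been lowered (winners' losing bids to $0$, and a loser $i$ keeps at most $b_i'(1)=v_i(1)$ and zeros elsewhere). Hence the highest rejected bid under $\vec b'$ is $\max$ over losing bidders $i$ of $b_i'(1)\in\{0,v_i(1)\}$, i.e. it is either $0$ (if every loser bids the zero vector) or equal to $v_i(1)$ for the loser $i$ achieving the maximum. And since every rejected bid under $\vec b'$ is $\le$ the corresponding rejected bid under $\vec b$, we get $p(\vec b')\le p(\vec b)$. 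The main obstacle, as noted, is the careful bookkeeping in the Nash-property verification — ensuring that simultaneously raising all winners' winning bids and zeroing all losing bids does not open up a profitable deviation for anyone; this reduces to the monotonicity facts that (i) raising a winning bid never changes the top-$k$ set when the winner set is fixed, and (ii) the price faced on any contemplated bundle is nonincreasing in this transformation. \qed
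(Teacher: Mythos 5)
There is a genuine gap in your construction, and it is exactly at the step you flagged as needing the most care. Your monotonicity argument only rules out deviations in which a bidder tries to \emph{grab extra units}; it says nothing about demand-reduction deviations, where a winner \emph{drops} units in order to crash the uniform price. By zeroing out \emph{all} losing bids (winners' losing bids and losers' bids beyond the first), you remove the ``padding'' of rejected bids sitting at or below the old price, and this can create exactly such a profitable deviation. Concrete counterexample: $k=2$, bidder $1$ with $v_1(x)=1.2x$, bidder $2$ with $v_2(x)=x$, both bidding truthfully. This is an undominated PNE: bidder $1$ wins both units at price $1$ (utility $0.4$), dropping to one unit still costs price $1$ because bidder $2$'s second bid of $1$ remains rejected, and bidder $2$ cannot win profitably. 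After your transformation bidder $2$ bids $(1,0)$, and now bidder $1$ can deviate to $(1.2,0)$: he keeps one unit, bidder $2$'s single bid of $1$ becomes winning, the highest rejected bid is $0$, and bidder $1$'s utility jumps to $1.2$. So the profile you output satisfies properties 1 and 2 and preserves the allocation, but it is \emph{not} a Nash equilibrium, which is what the lemma requires. (Note also that your stated reason for zeroing the losing bids --- fear of violating sortedness --- is vacuous: $b_i(x_i(\vec{b})+1)\leq b_i(x_i(\vec{b}))\leq m_i(x_i(\vec{b}))$, so raising the winning bids to marginals while leaving losing bids untouched is always a valid bid vector.)

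The paper's proof differs precisely where yours breaks. Its first stage coincides with yours (raise winning bids to marginals, keep losing bids as they are), but for property 2 it does \emph{not} wipe out all losing bids. Instead it argues via the equilibrium property that the price-setting bid cannot be a unique positive losing bid of a winner (he would lower it and strictly gain), and then iteratively zeroes out only the \emph{current highest} losing bids, and only when their value is neither $0$ nor equal to some bidder's $v_i(1)$; the process stops as soon as the price reaches such a value, leaving all lower rejected bids intact. In the example above the process stops immediately, since the price already equals $v_2(1)$, and bidder $2$'s remaining bid of $1$ is what keeps bidder $1$'s demand-reduction deviation unprofitable. To repair your argument you would have to adopt this more conservative, price-level-by-price-level modification (or otherwise prove that no winner gains by dropping units after your wholesale zeroing, which, as the example shows, is false in general).
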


\begin{proof}
Since $\vec{b}$ is an equilibrium in undominated strategies, for every bidder $i$ we have $b_i(1) = v_i(1)$. Also, $b_i(x)\leq m_i(x)$ for every $x\geq 2$. For every winning marginal bid $b_i(x)$ of any bidder $i$ and for $x=2,\dots,x_i(\vec{b})$, we may set $b_i'(x)=m_i(x)$ without altering neither the uniform price nor the units obtained by the winners under $\vec{b}$. This is because the losing bids remain unchanged whereas the winning bids only increase, hence the price is the same as before. 
Finally, the new profile is easily shown to be an equilibrium simply because we started with an equilibrium $\vec{b}$. Thus the first condition $b_i'(x)=m_i(x)$ for $x\leq x_i(\vec{b})$ can be satisfied for every winner $i$.

For the second condition, consider an equilibrium $\vec{b}$ satisfying the first condition. Assume that $\vec{b}$ is such that $p(\vec{b})=b_i(j)$, for some $j\geq 1$. If $j=1$, then it must be $x_i(\vec{b})=0$ and then $p(\vec{b}) = b_i(1) = v_i(1)$, because $\vec{b}$ is undominated. Consider the case where $j\geq 2$. Then $x_i(\vec{b})\geq 1$. If the value $b_i(j)=p(\vec{b})$ is {\em unique} in $\vec{b}$, then either $p(\vec{b})=0$, or $i$ could lower all his marginal bids for the $j$-th unit and onwards to $0$. This way, he would strictly increase his utility, by lowering the uniform price to the next highest marginal bid, which contradicts that $\vec{b}$ is a PNE. Thus we may assume that more than one highest losing bids of the same value exist, that determine the price. If none of them are equal to $v_{i}(1)$ for some $i$, we can zero out the bids of all bidders who have the highest losing bid (from that bid onwards) and again obtain a bidding configuration $\vec{b}'$ with a price $p(\vec{b}')<p(\vec{b})$. The new vector $\vec{b}'$ is still an undominated PNE. The utilities of all winners have now increased, but none of them may increase his utility more by deviating unilaterally, because we have not altered any of the winning bids. It is the last winning bid that will determine the new uniform price in case of any bidder trying to win an extra unit by deviating. But since there was no incentive to do such a deviation under $\vec{b}$ the same is true for $\vec{b'}$ too. Hence, we have managed to reduce the price and have some bidders zero out their losing bids. We can repeat this procedure for the configuration $\vec{b}'$, until we reach a a configuration satisfying the second property.\qed
\end{proof}

\section{Inefficiency of Undominated Pure Nash Equilibria}
\label{section:pne-poa}

We develop a welfare guarantee for pure Nash equilibria in undominated strategies, of the Uniform Price Auction, for bidders with submodular valuation functions. Recall that, given a configuration $\vec{b}$, we denote the ($k$ highest) winning bids by $\beta_j(\vec{b})$, $j=1,\dots,k$, so that $\beta_1(\vec{b})\leq\beta_2(\vec{b})\leq\cdots\leq\beta_k(\vec{b})$. We extend this notation to partial configurations $\vec{b}_{-i}$ for any bidder $i\in[n]$. 

Let $\vec{x}^*=(x_1^*,\dots,x_n^*)$ be a socially optimal allocation and $\vec{b}$ be a bidding configuration corresponding to an undominated pure Nash equilibrium of the auction. Under $\vec{b}$ the allocation is $\vec{x}(\vec{b})=(x_1(\vec{b}),\dots,x_n(\vec{b}))$. To simplify notation we use $\vec{x}$ for $\vec{x}(\vec{b})$ and $x_i$ for $x_i(\vec{b})$. Given any allocation $\vec{x}$, define the set ${\cal W}(\vec{x})=\{i|x_i\geq 1\}$ to be the subset of {\em winners}, i.e. bidders that receive at least one item unit. We also define $3$ additional sets, ${\cal W}_0(\vec{x}),{\cal W}_1(\vec{x}),{\cal W}_2(\vec{x})$, all with reference to $\vec{x}^*$ as follows:

\[
{\cal W}_0(\vec{x})=\{i\in{\cal W}(\vec{x}^*)|x_i\geq x_i^*\}\mbox{,\quad}
{\cal W}_1(\vec{x})=\{i\in{\cal W}(\vec{x}^*)|x_i < x_i^*\},
\]

\noindent and ${\cal W}_2(\vec{x})={\cal W}(\vec{x})\setminus
\Bigl({\cal W}_0(\vec{x})\cup{\cal W}_1(\vec{x})\Bigr)$.

We note that given any assignment $\vec{x}=\vec{x}(\vec{b})$ for a pure Nash equilibrium configuration $\vec{b}$, ${\cal W}_0(\vec{x})\cup{\cal W}_1(\vec{x})\cup{\cal W}_2(\vec{x})$ is a partition of ${\cal W}(\vec{x})$, given the assumption of undominated strategies about $\vec{b}$; every winner $i\in{\cal W}(\vec{x}^*)$ will still be a winner also under $\vec{b}$, because of specifying his $v_i(1)$ truthfully (by Lemma \ref{lm:vi(1)}), thus obtaining at least one unit. Hence none of ${\cal W}_0(\vec{x})$, ${\cal W}_1(\vec{x})$, ${\cal W}_2(\vec{x})$ may contain non-winning bidders of $\vec{x}$. First we present a general upper bound on the Price of Anarchy for undominated pure Nash equilibria. 

\begin{lemma}
Let $\vec{b}$ denote any undominated pure Nash equilibrium of a Uniform Price Auction for $k$ units and $\vec{x}^*$ be an assignment that maximizes the social welfare. The Price of Anarchy is:
\begin{equation}
PoA\leq\sup_{\vec{b}}\max_{i:x_i^*-x_i(\vec{b})>0}\left[
v_i(x_i^*)\cdot\left(
v_i\Bigl(x_i(\vec{b})\Bigr)+\sum_{j=1}^{x_i^*-x_i(\vec{b})}\beta_j(\vec{b})
\right)^{-1}\right]
\label{equation:poa-general-ub}
\end{equation}
\label{lemma:poa-general-ub}
\end{lemma}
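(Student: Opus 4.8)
The plan is to run a one–shot deviation argument for every bidder that the equilibrium ``under‑serves'' relative to the optimum, and then aggregate the resulting per‑bidder inequalities. First I would pass to a canonical equilibrium via Lemma~\ref{lemma:undominated-pne}: replace $\vec b$ by an undominated PNE with the \emph{same} allocation in which every winner $i$ bids $b_i(x)=m_i(x)$ for all $x\le x_i(\vec b)$ and whose uniform price is at most $p(\vec b)$. This is harmless for proving~(\ref{equation:poa-general-ub}), since the allocation --- hence the $PoA$ --- does not change, the price only drops, and the winning bids only (weakly) increase, so the right‑hand side of~(\ref{equation:poa-general-ub}) at the new profile is at most its value at $\vec b$. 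Because every bidder is then conservative in marginal values (Lemma~\ref{lm:no-overbidding}), the $k$ winning bids satisfy $\sum_{j=1}^{k}\beta_j(\vec b)\le\sum_i\sum_{x=1}^{x_i(\vec b)}m_i(x)=\sum_i v_i(x_i(\vec b))=SW(\vec x(\vec b))$; this accounting identity is the glue for the final step.

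Next I would fix a bidder $i$ with $x_i^*-x_i(\vec b)>0$ and consider the deviation in which $i$ raises his first $x_i^*$ marginal bids just above the $x_i^*-x_i(\vec b)$ smallest currently‑winning bids of the remaining bidders (the bids he must out‑bid in order to hold $x_i^*$ units in total), and sets his other bids to $0$. Under this deviation $i$ wins exactly $x_i^*$ units and the new uniform price equals the largest bid he displaced, which is one of the smallest winning bids of $\vec b$. Feeding this into the equilibrium inequality $u_i(\vec b)\ge v_i(x_i^*)-x_i^*\cdot(\text{new price})$, together with $u_i(\vec b)=v_i(x_i(\vec b))-x_i(\vec b)\,p(\vec b)$ and the fact that the uniform price never exceeds a winning bid (so $p(\vec b)\le\beta_1(\vec b)$), I would obtain the per‑bidder estimate $v_i(x_i^*)\le v_i(x_i(\vec b))+\sum_{j=1}^{x_i^*-x_i(\vec b)}\beta_j(\vec b)$; for a bidder with $x_i^*\le x_i(\vec b)$ the inequality $v_i(x_i^*)\le v_i(x_i(\vec b))$ is immediate from monotonicity of $v_i$. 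Summing over all bidders and using the accounting identity above to absorb $\sum_{i:\,x_i^*>x_i(\vec b)}\sum_{j=1}^{x_i^*-x_i(\vec b)}\beta_j(\vec b)$ into $SW(\vec x(\vec b))$ --- the point being that the winning bids an under‑served bidder must out‑bid are, in aggregate, charged against the bids (hence values) of the \emph{over‑served} bidders, so no winning bid is charged twice --- gives $SW(\vec x^*)\le\bigl(\sup_{\vec b}\max_{i}[\cdots]\bigr)\cdot SW(\vec x(\vec b))$, which is exactly~(\ref{equation:poa-general-ub}).

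The main obstacle is making both the per‑bidder step and the aggregation \emph{lossless}. Because uniform pricing charges all $x_i^*$ acquired units at the single displaced‑bid price, a naive bound produces $x_i^*$ copies of that price rather than the sum of the $x_i^*-x_i(\vec b)$ smallest winning bids, and a naive aggregation double‑counts the winning bids against $SW(\vec x(\vec b))$ --- either slip costs a spurious factor of roughly $2$. Overcoming this requires exploiting (i) the monotonicity $\beta_1(\vec b)\le\cdots\le\beta_k(\vec b)$ together with $p(\vec b)\le\beta_1(\vec b)$ to rewrite $x_i^*\cdot(\text{new price})-x_i(\vec b)\,p(\vec b)$ exactly as $\sum_{j=1}^{x_i^*-x_i(\vec b)}\beta_j(\vec b)$, and (ii) the structure provided by Lemma~\ref{lemma:undominated-pne}, namely that after normalization the bids an under‑served bidder must displace are precisely the lowest winning bids, which correspond to the surplus units held by over‑served bidders and are therefore already inside $SW(\vec x(\vec b))$ rather than on top of it. I would also need to dispose of the routine edge cases: ties among marginal bids (broken in the deviator's favour, or handled by an $\varepsilon$‑perturbation and a limiting argument), the possibility that holding exactly $x_i^*$ units forces $i$ to overbid some marginal value (permitted, as this is only a deviation), zero uniform prices, and equilibrium or optimal allocations that do not exhaust all $k$ units.
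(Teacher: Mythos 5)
Your central per-bidder estimate, $v_i(x_i^*)\leq v_i(x_i(\vec{b}))+\sum_{j=1}^{x_i^*-x_i(\vec{b})}\beta_j(\vec{b})$, is false at undominated pure Nash equilibria, and the lemma is a ratio bound precisely because it is false. The paper's own $k=2$ example refutes it: with $v_1(x)=x$, $v_2(x)=\frac12$, and the undominated PNE $\vec{b}_1=(1,0)$, $\vec{b}_2=(\frac12,0)$, one has $v_1(2)=2>v_1(1)+\beta_1(\vec{b})=\frac32$ (indeed, if your additive inequality held and the charging were lossless as you claim, the aggregation would give $SW(\vec{x}^*)\lesssim SW(\vec{b})$, contradicting the $\frac{4}{3}$ lower bound). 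The reason your deviation argument cannot deliver it is that uniform pricing charges \emph{all} $x_i^*$ units at the single new price: the deviation that grants $i$ the $r_i=x_i^*-x_i(\vec{b})$ missing units raises the price to $\beta_{r_i}(\vec{b})$, so the Nash condition only yields $v_i(x_i^*)\leq v_i(x_i)+x_i^*\beta_{r_i}(\vec{b})-x_i p(\vec{b})$; and since $p(\vec{b})\leq\beta_1(\vec{b})\leq\cdots\leq\beta_{r_i}(\vec{b})$, the quantity $x_i^*\beta_{r_i}(\vec{b})-x_i p(\vec{b})$ is at least $\sum_{j=1}^{r_i}\beta_j(\vec{b})$, not equal to it --- your proposed ``exact rewriting'' goes in the wrong direction, so the extra factor you worry about cannot be removed this way. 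Using the whole family of deviations $j=1,\dots,r_i$ does not help either: those inequalities are lower bounds on the $\beta_j$'s, and they are exactly what the paper exploits \emph{after} this lemma, in the proof of Theorem~\ref{theorem:multi-poa}, where they produce the $1-e^{-1}$ factor rather than an additive identity.

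The actual proof of Lemma~\ref{lemma:poa-general-ub} uses no Nash deviation at all; it is a charging argument relying only on the structure of undominated strategies (Lemmas~\ref{lm:no-overbidding} and~\ref{lm:vi(1)}). One partitions the equilibrium winners into ${\cal W}_0$, ${\cal W}_1$, ${\cal W}_2$ relative to $\vec{x}^*$, keeps $v_i(x_i^*)$ for $i\in{\cal W}_0$, and lower-bounds the value of the surplus units (those beyond $x_i^*$ for ${\cal W}_0$, and all units of ${\cal W}_2$) by the corresponding winning bids via conservativeness in marginal values. Since every unit missed by a bidder in ${\cal W}_1$ is held by someone in ${\cal W}_0\cup{\cal W}_2$, these surplus winning bids collectively dominate $\sum_{i\in{\cal W}_1}\sum_{j=1}^{r_i}\beta_j(\vec{b})$, giving $SW(\vec{b})\geq\sum_{i\in{\cal W}_0}v_i(x_i^*)+\sum_{i\in{\cal W}_1}\bigl(v_i(x_i)+\sum_{j=1}^{r_i}\beta_j(\vec{b})\bigr)$. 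Comparing this term by term with $SW(\vec{x}^*)=\sum_{i\in{\cal W}_0}v_i(x_i^*)+\sum_{i\in{\cal W}_1}v_i(x_i^*)$ (using Lemma~\ref{lm:vi(1)} to ensure ${\cal W}_0\cup{\cal W}_1$ exhausts ${\cal W}(\vec{x}^*)$) yields the stated maximum-of-ratios bound. Your preliminary normalization via Lemma~\ref{lemma:undominated-pne} is harmless but unnecessary here; it is needed only later, when the deviation inequalities are actually invoked.
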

The proof of the lemma is given in {\bf Appendix B}. We can now present our constant bound on the Price of Anarchy:

\begin{theorem} 
The Uniform Price Auction recovers in undominated pure Nash equilibrium a fraction of at least $1-e^{-1}$ of the optimal Social Welfare, for multi-demand bidders with symmetric submodular valuation functions.
\label{theorem:multi-poa}
\end{theorem}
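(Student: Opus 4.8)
The plan is to start from the general bound of Lemma~\ref{lemma:poa-general-ub} and show that the right-hand side is at most $e/(e-1)$. Fix an undominated pure Nash equilibrium $\vec{b}$ with allocation $\vec{x}=\vec{x}(\vec{b})$, and consider a bidder $i$ with $d_i := x_i^* - x_i > 0$ achieving the maximum in \eqref{equation:poa-general-ub}; such an $i$ lies in ${\cal W}_1(\vec{x})$. The key will be to lower-bound the quantity $v_i(x_i)+\sum_{j=1}^{d_i}\beta_j(\vec{b})$ in terms of $v_i(x_i^*)$. For the valuation part, submodularity (Proposition~\ref{proposition:valuation-functions}) gives $v_i(x_i) \geq (x_i/x_i^*)\,v_i(x_i^*)$. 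For the price part, I would argue that each of the $d_i$ lowest winning bids $\beta_j(\vec{b})$ is not too small: since $\vec{b}$ is an equilibrium, bidder $i$ cannot profitably deviate to grab one more unit, which forces $\beta_1(\vec{b})$ (the bid he would displace) to be at least his marginal gain $m_i(x_i+1)$ from that unit; more generally, considering deviations where $i$ wins $x_i+t$ units for $t \le d_i$, the relevant rejected bids must each be at least the corresponding marginal value $m_i(x_i+t) \ge m_i(x_i^*)$, and by submodularity $m_i(x_i^*) \ge v_i(x_i^*)/x_i^*$. This yields $\sum_{j=1}^{d_i}\beta_j(\vec{b}) \ge d_i \cdot v_i(x_i^*)/x_i^* = (d_i/x_i^*)\,v_i(x_i^*)$, so that $v_i(x_i)+\sum_{j=1}^{d_i}\beta_j(\vec{b}) \ge v_i(x_i^*)$ and the ratio in \eqref{equation:poa-general-ub} would be bounded by $1$.

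That naive argument is too optimistic — it would give PoA $\le 1$, which is false — so the subtlety is that the deviation of $i$ to win $x_i + t$ extra units does \emph{not} merely displace the $t$ smallest winning bids $\beta_1,\dots,\beta_t$; it reshuffles the configuration, and the price $i$ would then pay is the $( t{+}1)$-st smallest of the remaining winning bids, which can be larger than $\beta_t(\vec{b})$. The correct no-deviation inequality is: for every $t$ with $1 \le t \le d_i$,
\[
v_i(x_i) - x_i\, p(\vec{b}) \;\ge\; v_i(x_i+t) - (x_i+t)\,\beta_{t+1}(\vec{b}_{-i})\text{-type quantity},
\]
and unwinding this over all $t$, together with $p(\vec{b}) \le \beta_1(\vec{b})$, is what produces the sum $\sum_{j=1}^{d_i}\beta_j(\vec{b})$ already appearing in \eqref{equation:poa-general-ub}. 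So the real work is the optimization: having reduced to bounding
\[
\max\Bigl[\; v_i(x_i^*)\big/\bigl(v_i(x_i)+\textstyle\sum_{j=1}^{d_i}\beta_j(\vec{b})\bigr)\;\Bigr]
\]
subject to the constraints that (a) $v_i$ is submodular, (b) the $\beta_j(\vec{b})$ are lower-bounded by the marginal values $i$ would need to beat to win extra units, and (c) the marginal values of $i$ past $x_i$ are non-increasing and each at most $\beta_1(\vec{b}) \ge p(\vec{b})$.

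The main obstacle, and the heart of the proof, is this extremal analysis, and I expect it to be where the $1-e^{-1}$ constant enters. I would normalize $v_i(x_i^*) = 1$ and $x_i^* = $ some integer $s$, write $r = x_i$, and use submodularity to bound the adversary's best response: the worst case should put as much value as possible "early" (so $v_i(r)$ is forced to be at least $r/s$ by Proposition~\ref{proposition:valuation-functions}, but the marginal values $m_i(r+1),\dots,m_i(s)$ that lower-bound the $\beta_j$'s are forced to be small), balanced against the constraint that these marginals are non-increasing and sum appropriately. Relaxing the integer constraints and treating the valuation curve as a non-increasing function, the worst case reduces to choosing a threshold after which marginal value drops, and optimizing over that threshold yields an expression whose supremum over all thresholds (and over $r/s \in (0,1)$) is $e/(e-1)$ — the familiar $(1-1/e)$ bound that arises from $\max_{0<c<1}\frac{1}{c + \int_c^1(\text{something})}$-type computations, here essentially $\sup_x \frac{1}{1 - (1-1/x)^x}$-flavored. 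Concretely, I would show that for each fixed ratio $\rho = r/s$, the worst valuation gives denominator $\ge 1 - (1-\rho)\cdot(\text{factor} \ge 1/e$ as $s \to \infty)$, minimized as $\rho \to 0$ to give $1 - 1/e$, hence PoA $\le 1/(1-1/e) = e/(e-1)$. I would close by remarking that the bound is achieved in the limit, foreshadowing the near-matching lower bound construction promised for $k \ge 9$.
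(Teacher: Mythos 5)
Your overall plan --- start from Lemma~\ref{lemma:poa-general-ub}, use the no-deviation conditions of the equilibrium to lower-bound the low winning bids $\beta_j(\vec{b})$, then do an extremal analysis --- is the same route the paper takes, and you correctly diagnose that the naive bound ``each $\beta_j$ is at least $m_i(x_i^*)$'' is false. But the two steps carrying the actual content are missing or wrong. First, the deviation inequality is never pinned down: you leave it as a ``$\beta_{t+1}(\vec{b}_{-i})$-type quantity'' and even suggest the post-deviation price can exceed $\beta_t(\vec{b})$. The usable inequality is $v_i(x_i+j)-(x_i+j)\,\beta_j(\vec{b}) \le v_i(x_i)-x_i\,p(\vec{b})$, i.e.\ $\beta_j(\vec{b}) \ge \frac{1}{x_i+j}\bigl(v_i(x_i+j)-v_i(x_i)+x_i\,p(\vec{b})\bigr)$ for $j=1,\dots,x_i^*-x_i$; to obtain it one must argue that bidder $i$ can feasibly deviate so as to win exactly $j$ extra units and that the resulting price is (at most) $\beta_j(\vec{b})$, which the paper secures by first reducing, via Lemma~\ref{lemma:undominated-pne}, to a canonical equilibrium in which winners bid their true marginal values up to their allocation, combined with conservativeness of the other bidders and the structure of the optimal allocation. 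Without this \emph{averaged} lower bound (it involves $v_i(x_i+j)-v_i(x_i)$ divided by $x_i+j$, not a per-unit marginal), the $1-e^{-1}$ computation cannot even start.

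Second, the extremal analysis is asserted rather than carried out, and your setup contains a false constraint: you impose that the marginal values of $i$ past $x_i$ are at most $\beta_1(\vec{b})$, which is exactly what demand reduction violates (in the paper's own lower-bound instance bidder $1$ has $m_1(q+1)=1$ while $\beta_1(\vec{b})=\frac{1}{q+1}$); adding such a constraint shrinks the feasible set and could only ``prove'' a bound that fails for genuine equilibria. The correct computation substitutes the averaged bound into the denominator of \eqref{equation:poa-general-ub}, uses submodularity in the form $\frac{v_i(x_i+j)-v_i(x_i)}{j}\ge\frac{v_i(x_i^*)-v_i(x_i)}{x_i^*-x_i}$ together with $\frac{v_i(x_i^*)-v_i(x_i)}{x_i^*-x_i}\le\frac{v_i(x_i^*)}{x_i^*}$, compares the resulting harmonic sum with an integral, and arrives at $v_i(x_i)+\sum_j\beta_j \ge (1+\rho\ln\rho)\,v_i(x_i^*)$ with $\rho=x_i/x_i^*$; the worst case is at $\rho=e^{-1}$, not at $\rho\to 0$ as you claim (as $\rho\to 0$ the guarantee tends to $1$), and the ``$\sup_x 1/(1-(1-1/x)^x)$''-style expression you invoke is not the extremal problem that arises here. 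So the proposal locates the difficulty but supplies neither the key equilibrium inequality nor the optimization that actually produces $1-e^{-1}$.
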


\begin{proof}
Without loss of generality, it suffices to upper bound the social inefficiency of undominated equilibria $\vec{b}$ satisfying the properties of Lemma~\ref{lemma:undominated-pne}. Let $p(\vec{b})$ be the uniform price payed under equilibrium $\vec{b}$, i.e. the value of the highest losing (marginal) bid. In order to estimate a lower bound on the Social Welfare of $\vec{b}$, we consider possible deviations of bidders $i\in{\cal W}_1(\vec{x}(\vec{b}))$. We may assume that ${\cal W}_1(\vec{x}(\vec{b}))\neq\emptyset$ for, otherwise, ${\cal W}_0(\vec{x}(\vec{b}))={\cal W}(\vec{x}^*)$ and $\vec{b}$ is socially optimal, i.e. $SW(\vec{b})=SW(\vec{x}^*)$.

For every bidder $i\in{\cal W}_1(\vec{x}(\vec{b}))$, define $r_i(\vec{b})=x_i^*-x_i(\vec{b})$. For every bidder $i\in{\cal W}_1(\vec{x}(\vec{b}))$ and for every value $j=1,\dots,r_i(\vec{b})$, there exists a deviation that will grant him $j$ more units, additionally to the ones he already wins under $\vec{b}$; this is justified by the fact that at equilibrium $\vec{b}$, all bidders play marginal bids {\em equal} to their actual marginal values, or $0$ (as prescribed by the properties given in Lemma~\ref{lemma:undominated-pne}). Since the optimal assignment results from a simple sorting of the actual marginal values, every winner $i\in{\cal W}(\vec{x}^*)$ may feasibly deviate under $\vec{b}$, so as to obtain any total number of units between $x_i$ and $x_i^*$. A deviation of $i\in{\cal W}_1(\vec{x}(\vec{b}))$ for obtaining any number of $j=1,\dots,r_i(\vec{b})$ {\em additional} units will raise the uniform price to exactly $\beta_j(\vec{b})$ and cannot be profitable for $i$, i.e.:

\[
v_i(x_i(\vec{b})+j) - (x_i(\vec{b})+j)\cdot\beta_j(\vec{b})
\leq 
v_i(x_i(\vec{b})) - x_i(\vec{b})\cdot p(\vec{b})
\]

\noindent To simplify notation, we use hereafter $x_i$ for $x_i(\vec{b})$, $p$ for $p(\vec{b})$, $r_i$ for $r_i(\vec{b})$, and $\beta_j$ for $\beta_j(\vec{b})$, (always with respect to an undominated pure Nash equilibrium $\vec{b}$).

Then we deduce that for every $i\in{\cal W}(\vec{x}^*)$:

\begin{equation}
\beta_j\geq\frac{1}{j+x_i}\cdot\Bigl(v_i(x_i+j)-v_i(x_i)+x_i\cdot p\Bigr)
\mbox{,\ \ for\ }
j=1,\dots,r_i
\label{equation:equilibrium}
\end{equation}

\noindent We can now proceed to upper bound~\eqref{equation:poa-general-ub} from Lemma~\ref{lemma:poa-general-ub}, using~\eqref{equation:equilibrium} as follows:

\begin{align}
&
\displaystyle 
v_i(x_i)+\sum_{j=1}^{r_i}\beta_j
\geq \displaystyle
v_i(x_i)+\sum_{j=1}^{r_i}\frac{1}{j+x_i}\cdot
\Bigl(v_i(x_i+j)-v_i(x_i)\Bigr)
\label{equation:bids-lb}\\
& = \displaystyle
v_i(x_i)+\sum_{j=1}^{r_i}\left(
\frac{j}{j+x_i}\cdot\frac{v_i(x_i+j)- v_i(x_i)}{j}
\right)\nonumber\\
& \geq \displaystyle
v_i(x_i)+\frac{v_i(x^*_i)-v_i(x_i)}{x_i^*-x_i}\cdot
\sum_{j=1}^{r_i}\frac{j}{j+x_i}
\label{equation:submod-1}\\
& = \displaystyle
v_i(x_i)+\frac{v_i(x^*_i)-v_i(x_i)}{x^*_i-x_i}\cdot\left(
x_i^*-x_i-x_i\cdot\sum_{j=1}^{r_i}\frac{1}{j+x_i}
\right)\nonumber\\
& = \displaystyle
v_i(x_i^*)-\frac{v_i(x^*_i)-v_i(x_i)}{x^*_i-x_i}\cdot
x_i\cdot\sum_{j=1}^{r_i}\frac{1}{j+x_i}\\
& \geq \displaystyle
\left(
1-\frac{x_i}{x_i^*}\cdot\sum_{j=1}^{r_i}\frac{1}{j+x_i}\right)\cdot 
v_i(x_i^*)\geq
\left(
1-\frac{x_i}{x_i^*}\cdot\int_{x_i}^{x_i^*}\frac{1}{y}dy\right)\cdot 
v_i(x_i^*)\label{equation:submod-3}\\
& = \displaystyle
\left(
1+\frac{x_i}{x_i^*}\cdot\ln\frac{x_i}{x_i^*}\right)\cdot
v_i(x_i^*)\geq (1-e^{-1})\cdot v_i(x_i^*)\label{equation:end-result}
\end{align}

\noindent \eqref{equation:bids-lb} occurs by substitution of $\beta_j$ from \eqref{equation:equilibrium} and after dropping the $x_i\cdot p\geq 0$ term. \eqref{equation:submod-1} follows by submodularity of the valuation functions, particularly that $\frac{v_i(x_i+j)- v_i(x_i)}{j}\geq\frac{v_i(x_i^*)-v_i(x_i)}{x_i^*-x_i}$, for any $j=1,\dots,r_i$ where $r_i=x_i^*-x_i$. For \eqref{equation:submod-3} we used $\frac{v_i(x_i^*)-v_i(x_i)}{x_i^*-x_i}\leq\frac{v_i(x_i^*)}{x_i^*}$, given $v_i(0)=0$; we bounded the sum of harmonic terms with the integral, using $\sum_{k=m}^nf(k)\leq\int_{m-1}^nf(x)dx$, for a monotonically decreasing positive function. We obtain the final result by minimizing $f(y)=1+y\ln y$ over $(0,1)$ for $y=e^{-1}$. The claimed bound for the Price of Anarchy follows by Lemma~\ref{lemma:poa-general-ub}.\qed
\end{proof}

\bigskip

\noindent We will produce an almost matching lower bound for the result of theorem~\ref{theorem:multi-poa}, which holds for any number of units $k\geq 9$. We pause here to discuss first three simple tight examples for $k=2,3,4$ units.

\medskip

\noindent {\bf Examples.} We give a detailed example for $k=3$ first. We show a simple lower bound of $\frac{18}{13}$. Consider $3$ item units and $3$ bidders, with valuation functions $v_1(x)=x$, $v_2(x)=\frac{2}{3}$, $v_3(x)=\frac{1}{2}$. The socially optimal welfare is $3$, where bidder $1$ gets all the units. Consider the pure Nash equilibrium configuration $\vec{b}_1=(1,0,0)$, $\vec{b}_2=(\frac{2}{3},0,0)$ and $\vec{b}_3=(\frac{1}{2},0,0)$, for $j=1,2,3$. None of bidders $2$ and $3$ have incentive to change their bid, as their utility cannot improve. Should bidder $1$ raise $b_1(2)=0$ to $b_1'(2)>\frac{1}{2}$, he would win one more unit additionally to $x_1(\vec{b})=1$, but pay $x_1(\vec{b}')\times\frac{1}{2}=1$, thus not improving his utility. If bidder $1$ would raise $b_1(2)=0$ to $b_1'(2)>\frac{2}{3}$ and $b_1(3)=0$ to $b_1'(3)>\frac{1}{2}$, he would still not improve his utility; it would be $u_1(\vec{b}')=v_1(x_i(\vec{b}'))-x_i(\vec{b}')\times\frac{2}{3} = 3-3\times\frac{2}{3}=1=u_1(\vec{b})$. The Price of Anarchy in this example is $\frac{18}{13}$.

For $k=2$ and $k=4$ our examples follow a similar pattern to the one discussed for $k=3$. We can take $n=k=2$ bidders with valuation functions $v_1(x)=x$ and $v_2(x)=\frac{1}{2}$. This instance yields a Price of Anarchy at least $\frac{4}{3}$; in the social optimum, bidder $1$ gets both units, while in the (undominated) equilibrium profile, $\vec{b}_1=(1,0)$, $\vec{b}_2=(\frac{1}{2},0)$ (and $SW(\vec{b})=\frac{3}{2}$). For $k=4$, we take $n=k=4$ bidders with valuation functions $v_1(x)=x$, $v_2(x)=\frac{1}{2}$, $v_3(x)=\frac{2}{3}$, $v_4(x)=\frac{3}{4}$. In the social optimum we obtain welfare $4$, by giving all units to bidder $1$. For an undominated equilibrium profile, take $\vec{b}_1=(1,0,0,0)$, $\vec{b}_2=(\frac{1}{2},0,0,0)$, $\vec{b}_3=(\frac{2}{3},0,0,0)$, $\vec{b}_4=(\frac{3}{4},0,0,0)$. Then, $SW(\vec{b})=35/12$ and this gives a lower bound of $48/35$ for the Price of Anarchy. 

Notice that our lower bound for $k=4$ is smaller than for $k=3$. None the less, all these examples can be shown to provide tight lower bounds for their corresponding cases $k=2,3,4$, by usage of simple arguments and explicit treatment of the left-hand side of~\eqref{equation:submod-3}. For $k=2$, our argument is that the worst-case instance occurs when there is a {\em single} winner in the social optimum; otherwise, if there are $2$ winners, they also remain winners in undominated pure Nash equilibrium (due to Lemma~\ref{lm:vi(1)}) and the Price of Anarchy is $1$. In the worst-case instance though, the bidder being a winner (of $2$ units) in the social optimum remains a winner (of exactly $1$ unit) in the undominated pure Nash equilibrium. Taking $x_i=1$, $x_i^*=2$, $r_i=1$ in the left-hand side of~\eqref{equation:submod-3}, sufficies to obtain $\frac{3}{4}v_i(x_i^*)$. 

For the case of $k=3$ the reasoning is similar. There have to be strictly less than $3$ winners in the social optimum for, otherwise, the allocation coincides with that of an undominated pure Nash equilibrium. Now, if there are $2$ winners, one of them obtains only $1$ unit (out of $k=3$) in both, socially optimal allocation and undominated equilibrium allocation. Thus, the social inefficiency is due to the other winner losing one unit in equilibrium (of the $2$ that he obtains in the social optimum). By our experience with $k=2$(and by Lemma~\ref{lemma:poa-general-ub}), this example cannot have Price of Anarchy more than $4/3$. Thus, we may assume that there exists a single winner in the social optimum. To achieve maximum welfare ``damage'' in undominated pure Nash equilibrium, this single winner of $3$ units loses $2$ of them in equilibrium and we apply~\eqref{equation:submod-3} appropriately, to obtain an upper bound of $18/13$. For $k=4$ the reasoning uses our experience from both previous cases. We exclude instances with $4$, $3$ and $2$ winners in the social optimum, as they cannot have Price of Anarchy more than $1$, $4/3$, $18/13$ respectively. For maximum welfare damage, we assume that the single winner loses $3$ out of $4$ units in undominated pure Nash equilibrium.

\medskip

The following more general lower bound is valid for at least $k\geq 9$ units. For the remaining values of $k=5,6,7,8$, we do not have any tighter upper and lower bounds.

\begin{theorem}
For any $k\geq 9$, the Uniform Price Auction recovers in undominated pure Nash equilibrium at most a factor $(1-e^{-1}+\frac{2}{k})$ of the optimal social welfare, even for $2$ submodular bidders.
\end{theorem}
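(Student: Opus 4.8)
The plan is to construct, for every $k\ge 9$, a two-bidder instance together with an undominated pure Nash equilibrium whose social welfare is at most a $\bigl(1-e^{-1}+\tfrac2k\bigr)$-fraction of the optimum; conceptually this runs the chain of inequalities in the proof of Theorem~\ref{theorem:multi-poa} backwards, making each of them (asymptotically) tight. Fix $t=\lfloor k/e\rfloor$, so $k/e-1<t\le k/e$ and $t\ge 3$ for $k\ge 9$. Let bidder $1$ have the linear valuation $v_1(x)=x$ (all marginals equal to $1$), and let bidder $2$ have marginal values
\[
m_2(s)=1-\frac{t}{k-s+1}\quad\text{for }s=1,\dots,k-t,\qquad m_2(s)=0\quad\text{for }s>k-t.
\]
Writing $a=k-s+1$, $m_2(s)=1-t/a$ is increasing in $a$, hence non-increasing in $s$, so $v_2$ is submodular; also $0<m_2(s)<1$ on $\{1,\dots,k-t\}$, so $v_2(m)<m$ for every $m\ge 1$. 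Consequently $v_1(a)+v_2(k-a)=a+v_2(k-a)<k$ for $a<k$, so the unique optimal allocation gives all $k$ units to bidder~$1$ and $SW(\vec x^*)=k$.

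Next I would exhibit the equilibrium. Let bidder~$1$ bid $\vec b_1=(1,\dots,1,0,\dots,0)$ with $t$ ones and bidder~$2$ bid truthfully, $\vec b_2=(m_2(1),\dots,m_2(k))$. Both strategies are conservative in marginal values and report $v_i(1)$ for the first unit; bidder~$2$ is truthful (hence undominated) and bidder~$1$'s strategy is undominated (zeroing out marginal bids past the first is not weakly dominated), so the profile is undominated. Under $\vec b$ the $k$ winning bids are the $t$ ones of bidder~$1$ together with the $k-t$ positive bids of bidder~$2$; the highest rejected bid is $0$, so $x_1(\vec b)=t$, $x_2(\vec b)=k-t$ and $p(\vec b)=0$. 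For the equilibrium check: bidder~$2$ gains nothing by shrinking her demand (the price is already $0$), and to win more than $k-t$ units she must outbid one of bidder~$1$'s unit bids, raising her per-unit price to $1$ while her marginal value for those extra units is $0$ — not profitable. For bidder~$1$, exactly as in the derivation of~\eqref{equation:equilibrium}, a deviation that wins $j$ extra units ($1\le j\le k-t$) raises the price to $\beta_j(\vec b)=m_2(k-t-j+1)=1-\tfrac{t}{t+j}$, giving utility $(t+j)\bigl(1-m_2(k-t-j+1)\bigr)=(t+j)\cdot\tfrac{t}{t+j}=t$, exactly bidder~$1$'s current utility $v_1(t)-t\cdot 0=t$. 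Hence $\vec b$ is an undominated pure Nash equilibrium; the valuation $m_2$ is designed precisely so that bidder~$1$ is indifferent across all such deviations.

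Finally, $SW(\vec b)=v_1(t)+v_2(k-t)=t+\sum_{s=1}^{k-t}\bigl(1-\tfrac{t}{k-s+1}\bigr)=k-t\,(H_k-H_t)$, where $H_m=\sum_{r=1}^m 1/r$, so the recovered fraction equals $1-\tfrac{t}{k}(H_k-H_t)$. Using $H_k-H_t=\sum_{r=t+1}^{k}\tfrac1r\ge\int_{t+1}^{k+1}\tfrac{dx}{x}=\ln\tfrac{k+1}{t+1}\ge 1-\tfrac{e-1}{k+1}$ (the last step from $t\le k/e$ and $\ln(1+x)\le x$) together with $t>k/e-1$, a short computation yields $t(H_k-H_t)>k/e-2$, whence $SW(\vec b)/SW(\vec x^*)<1-e^{-1}+\tfrac2k$; the hypothesis $k\ge 9$ keeps $t=\lfloor k/e\rfloor\ge 3$ and absorbs the discretization/harmonic-to-integral error into the $2/k$ term. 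I expect the only genuine work to be the equilibrium verification — pinning down the post-deviation uniform prices exactly, both for bidder~$1$'s "buy more" deviations and for ruling out bidder~$2$'s overbidding deviations, and confirming that the chosen $m_2$ makes bidder~$1$ exactly indifferent — while the closing estimate $t(H_k-H_t)\ge k/e-2$ is routine once $t$ is fixed.
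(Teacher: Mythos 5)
Your construction is correct and is essentially the paper's own: the same two-bidder instance (a linear bidder and a harmonic-tail bidder engineered so that the demand-reducing bidder is exactly indifferent to buying more units at the induced price), the same zero-price undominated equilibrium, and the same harmonic-sum/integral estimate, differing only in the choice $t=\lfloor k/e\rfloor$ versus the paper's $q=\lfloor e^{-1}k-1\rfloor$ and in the routine closing arithmetic, which indeed yields $t(H_k-H_t)>k/e-2$ and hence the claimed bound.
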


\begin{proof}
Consider $k\geq 9$ units and $2$ bidders. For $q=\lfloor e^{-1}\cdot k-1\rfloor$ (notice that $q\geq 1$) define the valuation functions to be:

\[
v_1(x) = x
\mbox{\quad\quad and\quad\quad}
v_2(x)=\left\{
\begin{array}{ll}
x-q\cdot(H_k-H_{k-x})\, & x\leq k-q\medskip\\
k-q\cdot(1+H_k-H_q)\, & x > k-q
\end{array}\right.
\]

\noindent where $H_m$ is the $m$-th harmonic number. Notice that the marginal values of bidder $2$ are equal to $0$ for $x>k-q$. It can be verified that $v_2$ is symmetric submodular in $x$:

\[
v_2(x)=x-q\cdot\Bigl(H_k-H_{k-x}\Bigr)=
\sum_{j=1}^x\left(1-\frac{q}{k-j+1}\right)=
\sum_{j=1}^x\frac{r-j+1}{k-j+1}
\]

\noindent where $r=k-q$. Then
$\frac{r-j+1}{k-j+1}\leq\frac{r-j+2}{k-j+2}=\frac{r-(j-1)+1}{k-(j-1)+1}$, thus
$v_2(x)-v_2(x-1)\leq v_2(x-1)-v_2(x-2)$, for $x\leq k-q$; for $x > k-q$,
$v_2(x)=v_2(x-1)$, thus $v_2$ is submodular.

For the optimal social welfare we grant all units to bidder $1$, i.e. $\vec{x}=(k,0,\dots,0)$ and obtain a total welfare $SW(\vec{x}^*)=k$. For the equilibrium configuration $\vec{b}$ we set:

\[
b_1(j)=\left\{\displaystyle
\begin{array}{cl}
1, & \mbox{for\ }j\leq q\smallskip\\
0, & \mbox{for\ }j > q
\end{array}
\right.
\mbox{\quad\quad}
b_2(j)=\left\{\displaystyle
\begin{array}{cl}
\frac{r-j+1}{k-j+1}, & \mbox{for\ }j\leq r=k-q\smallskip\\
0, & \mbox{for\ }j >r
\end{array}
\right.
\]
 
Thus, under $\vec{b}$, $q$ units are obtained by bidder $1$ and $k-q$ units by bidder $2$. We show that $\vec{b}$ is a pure Nash equilibrium. Notice that bidder $2$ is essentially truthful in this profile and may not increase his bids further so as to obtain another unit (given that he plays undominated strategies). On the other hand, the uniform price is $0$ in this setting, so bidder $2$ does obtain the maximum of his utility for the won units. Bidder $1$ also pays the uniform price of $0$, so he does not have incentive to drop any of his units. Should bidder $1$ try to retain any $j\leq r$ of the $r=k-q$ units held by bidder $2$, the uniform price would become $\frac{j}{k-r+j}$ and bidder $1$ will hold a total of $k-r+j$ units. The marginal gain from bidder $1$ obtaining the extra $j$ units is cancelled out by a total payment equal to $j$; thus bidder $1$ does not have incentive to deviate under $\vec{b}$.

For the social welfare of $\vec{b}$ we have:

\[
\begin{array}{lclcl}
SW(\vec{b}) & = & v_1(q)+v_2(r) & = &  q+r-q\cdot\Bigl(H_k-H_q\Bigr)\\
& = &  k - q\cdot\Bigl(H_k-H_q\Bigr) & = & k\cdot\Bigl(1-\frac{q}{k}\cdot\left(H_k-H_q\right)\Bigr)
\end{array}
\]

\noindent Then, the Price of Anarchy is at least:

\begin{align}
\displaystyle
\frac{k}{
k\cdot\Bigl(1-\frac{q}{k}\cdot\left(H_k-H_q\right)\Bigr)} 
&=\displaystyle
\left(1-\frac{q}{k}\cdot\Bigl(H_k-H_q\Bigr)\right)^{-1}\nonumber\\
\geq\displaystyle
\left(1-\frac{e^{-1}\cdot k-2}{k}\cdot\int_{q+1}^{k}\frac{1}{y}dy
\right)^{-1}
& =\displaystyle
\left(1-\frac{e^{-1}\cdot k-2}{k}\cdot\ln\frac{k}{\lfloor e^{-1}k-1\rfloor+1}
\right)^{-1}\nonumber\\
\geq\displaystyle
\left(1-e^{-1}+\frac{2}{k}\right)^{-1} & \nonumber
\end{align}

\noindent where we used $H_k-H_q=\sum_{r=q+1}^k\frac{1}{r}\geq\int_{q+1}^{k+1}\frac{1}{y}dy\geq\int_{q+1}^k\frac{1}{y}dy$, for monotonically decreasing positive functions; the final derivation follows by $q+1\leq e^{-1}\cdot k$ and $\lfloor e^{-1}k-1\rfloor+1\geq e^{-1}k$\qed
\end{proof}

\section{Bayes-Nash Inefficiency with Undominated Support}
\label{section:bne-poa}

In this section we investigate the social inefficiency of (mixed) Bayes-Nash equilibria for bidders with submodular valuation functions. Just as in the case of pure equilibria we focused on undominated strategies, here we will focus on mixed Bayes-Nash equilibria that are supported by pure undominated strategies. We wish to note, however, that the assumption of undominated support is only marginally restrictive for the main (social inefficiency) result described in this section; it guarantees the properties given by Lemmas~\ref{lm:no-overbidding} and~\ref{lm:vi(1)}, that allow us to prove a tighter inefficiency bound. In the end of this section we discuss how our analysis leads to a similar (slightly worse) bound, under the standard assumption of no-overbidding used, e.g., in~\cite{Christodoulou08,Bhawalkar11,Feldman13}.

Following~\cite{Christodoulou08,Bhawalkar11}, to ensure the existence of mixed Bayes-Nash equilibria, we make the assumption of a finite bidding space for bidders under sufficiently fine discretization.
We claim that such mixed Bayes-Nash equilibria supported by undominated pure strategies exist for the case of bidders with submodular valuation functions. Indeed, consider a Bayesian game where the strategy space of each player is bounded (e.g. of the form $[0, U]$, where $U$ is a sufficiently large upper bound on the values of all bidders) and finite, through some sufficiently fine discretization. 

We first claim that in our setting there always exists a Bayes-Nash equilibrium where all strategies used in its support are not weakly dominated. Then we will claim that these strategies are conservative w.r.t. marginal bids. To argue about these statements, we use two well known facts from game theory. The first one is that a Bayes-Nash equilibrium $\vec{B}$ can be seen as a Nash equilibrium of a complete information game (see e.g. \cite{Osborne09}[Chapter 9]), where the set of players is the set of pairs $(i, v_i)$ for every $i=1,...,n$ and $v_i\in V_i$, and the strategy space of player $(i, v_i)$ is the same as the strategy space of $i$ in the Bayesian game. For a given mixed strategy profile $\vec{B}$ in this game, the utility function of player $(i, v_i)$, denoted by $u_i^{v_i} (\vec{B})$, is:

$$
u_i^{v_i}(\vec{B}) = 
\mathbb{E}_{\vec{v}_{-i}|v_i}\Biggl[
\mathbb{E}_{\vec{b}\sim\vec{B}^{(v_i,\vec{v}_{-i})}}
\Bigl[u_i(\vec{b})\Bigr]\Biggr]
=\sum_{\vec{v}_{-i}\in{\cal V}_{-i}}\pi(\vec{v}_{-i})
\mathbb{E}_{\vec{b}\sim\vec{B}^{(v_i,\vec{v}_{-i})}}\Bigl[
u_i(\vec{b})\Bigr]
$$
 
Note that the utility of a player $(i, v_i)$ does not depend on the action of other pairs that involve $i$, (i.e., on the other types of player $i$). Since this complete-information game is a finite game it possesses a mixed Nash equilibrium. It is well known that in any finite game there always exists a mixed equilibrium where no weakly dominated action is contained in the support of each player's strategy (see \cite{Osborne09} [Section 4.4]). Putting everything together, we have that there always exists a Bayes-Nash equilibrium where every strategy used in its support is not weakly dominated. It is an easy exercise to verify the validity of Lemma~\ref{lm:no-overbidding} for a sufficiently fine discretization of the strategy space and that Lemma~\ref{lm:vi(1)} holds, for any such discretization that does not exclude the actual marginal values from the bidders' strategy space.

We introduce some auxiliary notation for the analysis that follows. For any valuation profile $\vec{v}\in{\cal V}$ let $\vec{x}^{\vec{v}}=(x_1^{\vec{v}},\dots,x_n^{\vec{v}})$ denote the socially optimal assignment. For any particular bidder $i\in[n]$ let ${\cal U}^i\subseteq{\cal V}$ denote the subset of valuation profiles $\vec{v}\in{\cal V}$ where $x_i^{\vec{v}}\geq 1$, i.e., ${\cal U}^i=\{\vec{v}\in{\cal V}|x_i^{\vec{v}}\geq 1\}$; these are the profiles under which $i$ is a ``social optimum winner''. Accordingly, we let ${\cal W}^{\vec{v}}$ denote the subset of ``social optimum winners'' in valuation profile $\vec{v}\in{\cal V}$. Given any (pure) bidding profile $\vec{b}$, we use the {\em ``operator''} $\beta_j(\vec{b})$ here as well, to denote the $j$-th lowest winning bid in $\vec{b}$, as in Section~\ref{section:pne-poa}. The following Lemma will serve the purpose of lower bounding the Social Welfare of a profile $\vec{b}$ by a sum of the winning bids (much like inequalities~\eqref{equation:first-term} and~\eqref{equation:third-term} along with Lemma~\ref{lm:no-overbidding} did, in the proof of Lemma~\ref{lemma:poa-general-ub} in the previous section).

\begin{lemma}
For a valuation profile $\vec{v}$, let $\vec{b}$ denote an arbitrary pure bidding profile in undominated strategies, $p(\vec{b})$ be the uniform price under $\vec{b}$ and let $\vec{x}^*$ be the efficient (socially optimal) assignment of $k$ units to $\ell\leq k$ winning bidders w.r.t. $\vec{v}$. Fix an arbitrary ordering of the $\ell$ winning bidders under $\vec{x}^*$ and define $t_i=\lceil\frac{x_i^*}{2}\rceil$, $i=1,\dots,\ell$. Then: 
\begin{equation}
\sum_{1\leq i\leq\ell}t_i\beta_{t_i}(\vec{b}_{-i})
\leq
p(\vec{b})+\frac{k-1}{k}SW(\vec{b})
\label{equation:sw-cover}
\end{equation}
\label{lemma:sw-cover}
\end{lemma}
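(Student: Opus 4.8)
The goal is to bound the sum $\sum_i t_i\beta_{t_i}(\vec b_{-i})$ where $t_i=\lceil x_i^*/2\rceil$, by relating each ``truncated'' winning bid $\beta_{t_i}(\vec b_{-i})$ to the actual winning bids in $\vec b$. The plan is to fix a bidder $i$ and compare the sorted winning bids of the full profile $\vec b$ with those of the profile $\vec b_{-i}$ obtained by removing $i$'s bids. Removing one bidder can remove at most all of $i$'s own winning marginal bids from the pool, so the $k$-th lowest winning bid of $\vec b_{-i}$ is at least as high as the $k$-th lowest winning bid of $\vec b$, and more precisely $\beta_j(\vec b_{-i})\le\beta_{j+x_i(\vec b)}(\vec b)$ whenever the right side is defined; the values $\beta_j(\vec b_{-i})$ for the top slots coincide with the top winning bids of $\vec b$ that do not belong to $i$. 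The key quantitative fact I would extract is that, ordering the winning bids of $\vec b$ as $\beta_1\le\cdots\le\beta_k$, and letting $\beta_0:=p(\vec b)$ be the highest losing bid, we have $\beta_{t_i}(\vec b_{-i})\le\beta_{s}(\vec b)$ for an appropriate index $s$ shifted by at most the number of units won by others, and crucially that across all $i$ the indices used are ``spread out'' enough that each winning bid $\beta_j(\vec b)$ is charged only a bounded number of times.

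\textbf{Main steps.} First I would establish the single-bidder comparison: for each winning bidder $i$ in $\vec x^*$, $t_i\le\lceil k/2\rceil$ and $\beta_{t_i}(\vec b_{-i})$ is bounded above by winning bids of $\vec b$ — specifically, since $\sum_i x_i^*=k$ we get $\sum_i t_i\le\sum_i(x_i^*+1)/2\le (k+\ell)/2\le k$, so the multiset of ``demanded slots'' $\{t_i\}_i$ has total size at most $k$. Second, I would argue that one can injectively (or with bounded multiplicity) map each pair $(i,$ slot index up to $t_i)$ to a distinct winning bid position of $\vec b$ that dominates it, using that $\beta_{t_i}(\vec b_{-i})$ equals the $t_i$-th lowest among the $k$ winning bids of $\vec b$ after deleting $i$'s (at most $x_i^*$, but really $x_i(\vec b)$) own winning bids — and since $t_i\le x_i^*$, removing $i$'s own bids still leaves the bid at a controlled rank. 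Third, summing $t_i\beta_{t_i}(\vec b_{-i})$ then telescopes into at most $\sum_{j}\beta_j(\vec b)$ plus a correction of one extra copy of the largest relevant bid, which is at most $p(\vec b)$ when the slot would ``overflow'' past rank $k$; the factor $\frac{k-1}{k}$ appears because $SW(\vec b)\ge\sum_{j=1}^k\beta_j(\vec b)\cdot\frac{j}{k}$-type averaging via submodularity (Proposition~\ref{proposition:valuation-functions}) or, more directly, because by Lemma~\ref{lm:no-overbidding} the winners' marginal bids are at most their marginal values so $SW(\vec b)\ge\sum_i v_i(x_i(\vec b))\ge\sum_i\sum_{x\le x_i(\vec b)}b_i(x)=\sum_{j=1}^k\beta_j(\vec b)$, and one winning bid (the smallest, $\beta_1$) must be subtracted off or absorbed into the $p(\vec b)$ term, yielding the $\frac{k-1}{k}$ scaling.

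\textbf{Main obstacle.} The delicate point is controlling the overlap: when we remove bidder $i$'s bids, $\beta_{t_i}(\vec b_{-i})$ can be as large as $\beta_{t_i+x_i(\vec b)}(\vec b)$, and if many bidders win many units these shifted indices could all pile onto the few largest winning bids, which is exactly why we need $t_i=\lceil x_i^*/2\rceil$ rather than $t_i=x_i^*$ — the halving ensures $\sum_i(t_i+x_i(\vec b))$ stays controlled and the shifted indices, counted with multiplicity, cover the range $\{1,\dots,k\}$ only a bounded (essentially constant) number of times. I expect the bulk of the work to be a careful counting/charging argument showing that $\sum_i t_i\beta_{t_i}(\vec b_{-i})\le\sum_{j=1}^k\beta_j(\vec b)+p(\vec b)$ exactly, after which the bound $\sum_{j=1}^k\beta_j(\vec b)\le SW(\vec b)$ is immediate from no-overbidding and the stated inequality follows (the $\frac{k-1}{k}$ rather than $1$ is then a mild sharpening obtained by noticing one bid can be charged to $p(\vec b)$ instead). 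If the clean ``$\le\sum\beta_j+p$'' bound fails by a constant, I would instead aim directly for inequality~\eqref{equation:sw-cover} as stated, tracking the $1/k$ losses through the harmonic-type estimates already used in the proof of Theorem~\ref{theorem:multi-poa}.
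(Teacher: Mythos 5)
Your high-level shape (charge each term $t_i\beta_{t_i}(\vec{b}_{-i})$ against winning bids of $\vec{b}$, then use Lemma~\ref{lm:no-overbidding} to get $\sum_{j=1}^k\beta_j(\vec{b})\le SW(\vec{b})$) is the paper's, and that last no-overbidding step is correct as you state it. But the plan rests on a directional error and omits the steps that carry the weight. Removing bidder $i$'s bids can only weakly \emph{lower} every order statistic of the $k$ winning bids, so $\beta_j(\vec{b}_{-i})\le\beta_j(\vec{b})$ for every $j$; your claim that the statistics of $\vec{b}_{-i}$ are ``at least as high'' and your worry that $\beta_{t_i}(\vec{b}_{-i})$ can be as large as $\beta_{t_i+x_i(\vec{b})}(\vec{b})$ (and that these shifted ranks might pile onto the top bids) are backwards, so the ``main obstacle'' you plan to spend the work on is a non-issue. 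The actual work is the charging itself, which you never construct: the paper fixes the ordering of the optimum winners, sets $\psi_i=\sum_{j\le i}x_j^*$ (so $\psi_\ell=k$), and shows by induction that bidder $i$'s $t_i$ copies of $\beta_{t_i}(\vec{b})$ can be charged to the distinct indices $\psi_{i-1}-1+t_i,\dots,\psi_i-1$, i.e.\ the upper half of $i$'s own block in $\vec{x}^*$ shifted down by one; these indices are all at least $t_i$ (since $t_i\le x_i^*$ and $\psi_{i-1}\ge 1$), are disjoint across bidders, and never include index $k$, leaving exactly one uncharged copy, namely $\beta_{t_1}(\vec{b}_{-1})$. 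Your observation that $\sum_i t_i\le k$ does not by itself produce such a system of distinct dominating positions.

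Two further gaps. First, the factor $\frac{k-1}{k}$: your proposed sources (subtracting the \emph{smallest} winning bid $\beta_1(\vec{b})$, or an unspecified submodular averaging) do not work, since $\beta_1(\vec{b})$ can be negligible compared to $SW(\vec{b})/k$. The paper gets the factor because the charging avoids index $k$ and because undominatedness forces $b_i(1)=v_i(1)=m_i(1)$ (Lemma~\ref{lm:vi(1)}), whence by submodularity the largest winning bid satisfies $\beta_k(\vec{b})\ge\max_i m_i(1)\ge SW(\vec{b})/k$, giving $\sum_{j\le k-1}\beta_j(\vec{b})\le SW(\vec{b})-\beta_k(\vec{b})\le\frac{k-1}{k}SW(\vec{b})$; this ingredient is absent from your plan. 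Second, the $p(\vec{b})$ term: you assert that one ``overflow'' copy is at most $p(\vec{b})$, but a winning bid is by definition at least $p(\vec{b})$, so this cannot be waved through. In the paper the leftover term is specifically $\beta_{t_1}(\vec{b}_{-1})$ -- a statistic of the \emph{reduced} profile -- and bounding it by $p(\vec{b})$ is exactly where the undominated-strategies hypothesis and the structure of the optimal allocation are invoked (enough of that bidder's bids are winning that his removal promotes previously losing bids, $p(\vec{b})$ among them, into the relevant positions); your plan does not engage with this step at all. Without these two pieces your approach yields at best $\sum_i t_i\beta_{t_i}(\vec{b}_{-i})\le SW(\vec{b})+(\text{some winning bid})$, which is weaker than \eqref{equation:sw-cover}, and the fallback of ``tracking $1/k$ losses through harmonic estimates'' is not an argument.
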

To ease our way towards the main result of this section, we defer the technical proof of this Lemma to the end of our exposition, along with a discussion on how to replace the undominated support assumption with the no-overbidding assumption.

The following Lemma facilitates the expression of BNE conditions regarding
unilateral deviations, and has been proved in a different form and under a
different context (for simultaneous single-unit auctions with combinatorial
bidders) also in~\cite{Christodoulou08,Bhawalkar11}. We provide its proof here for completeness.
\begin{lemma}
For every bidder $i\in[n]$ with submodular valuation $v_i$ define the
bidding vector $\vec{m}_i^{[j]}=(m_i(1),m_i(2),\dots,m_i(j),0,0,\dots,0)$ .
For any conservative bidding profile $\vec{b}_{-i}$, and for any number of
units $j\in\{0, 1,\ldots, k\}$: 
$$
u_i(\vec{m}_i^{[j]},\vec{b}_{-i})\geq
v_i(j)-j\cdot\beta_j(\vec{b}_{-i}).
$$
\label{lemma:bayesian-lemma}
\end{lemma}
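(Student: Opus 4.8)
\textbf{Proof plan for Lemma~\ref{lemma:bayesian-lemma}.}

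The plan is to analyze what happens when bidder $i$ submits the truthful-prefix vector $\vec{m}_i^{[j]}=(m_i(1),\dots,m_i(j),0,\dots,0)$ against a fixed conservative profile $\vec{b}_{-i}$, and to track how many units $i$ actually wins. Let $s=x_i(\vec{m}_i^{[j]},\vec{b}_{-i})$ be that number; since $i$'s marginal bids are $0$ from position $j+1$ on, we have $s\leq j$. The first step is to pin down the uniform price $p=p(\vec{m}_i^{[j]},\vec{b}_{-i})$. Because $i$ wins exactly $s$ units, the $s$ highest bids of $i$ (namely $m_i(1),\dots,m_i(s)$) are among the $k$ overall winners, and the price equals the highest losing bid; that losing bid comes either from $\vec{b}_{-i}$ or is one of $i$'s own zero bids. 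In either case I claim $p\leq\beta_s(\vec{b}_{-i})$ is false in general, so instead I will bound the price by $\beta_j(\vec{b}_{-i})$ in a slightly indirect way, as follows.

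The key observation is the standard one used in these simultaneous-auction arguments: when $i$ plays $\vec{m}_i^{[j]}$, among the $k$ winning slots, the bids coming from $\vec{b}_{-i}$ that survive are exactly the $k-s$ highest bids of $\vec{b}_{-i}$ that exceed the price, so the price is at most the $(k-s)$-th largest, counting from the bottom — equivalently $p\le\beta_{k-s+1}(\vec b)$-type reasoning. Cleaner: I will argue directly that $i$ wins \emph{at least} as many units as the number of coordinates $t\in\{1,\dots,j\}$ with $m_i(t)>\beta_t(\vec b_{-i})$, and that for each unit $i$ does win, the price he pays per unit is at most $\beta_j(\vec b_{-i})$. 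Concretely, if $i$ wins $s$ units then the $s$-th winning bid of $i$ is $m_i(s)$, and $m_i(s)\ge m_i(j)$ by submodularity; the price is the highest rejected bid, and since $i$'s bids $m_i(s+1),\dots$ are rejected only when they are $0$ or below the clearing level, while the relevant competing rejected bid from $\vec b_{-i}$ sits at position $\beta_{k-s}$ of $\vec b_{-i}$... here I will instead invoke the simpler fact that raising $i$'s bid from whatever he does in $\vec m_i^{[j]}$ to win one more unit costs per-unit price at most $\beta_j(\vec b_{-i})$, because to take a $j$-th unit $i$ needs only to outbid the $j$-th smallest of the currently-winning bids among $\vec b_{-i}$, which is $\beta_j(\vec b_{-i})$.

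Putting this together: when $i$ bids $\vec m_i^{[j]}$ he can be guaranteed to end up winning some number $s\le j$ of units at uniform price $p\le \beta_j(\vec b_{-i})$; actually the crucial point is that $i$ could always choose to win \emph{exactly} $j$ units by this very bid if $m_i(t)\ge\beta_j(\vec b_{-i})$ for all $t\le j$, and otherwise the marginal values below $\beta_j(\vec b_{-i})$ are exactly the units not worth taking. In the worst case $s<j$, and then $u_i(\vec m_i^{[j]},\vec b_{-i})=v_i(s)-s\cdot p$. Using submodularity, $v_i(s)=\sum_{t\le s}m_i(t)\ge \sum_{t\le s}\beta_j(\vec b_{-i})\ge s\,\beta_j(\vec b_{-i})\ge s\,p$ need not help directly; instead I compare with $v_i(j)-j\beta_j(\vec b_{-i})$: the units $t=s+1,\dots,j$ that $i$ fails to win all have $m_i(t)\le \beta_j(\vec b_{-i})$ (since $i$ would have won them otherwise at that price), so $\sum_{t=s+1}^j m_i(t)\le (j-s)\beta_j(\vec b_{-i})$, hence $v_i(j)-j\beta_j(\vec b_{-i}) = v_i(s)-s\beta_j(\vec b_{-i}) + \big(\sum_{t=s+1}^j m_i(t)-(j-s)\beta_j(\vec b_{-i})\big)\le v_i(s)-s\beta_j(\vec b_{-i})\le v_i(s)-s\,p = u_i(\vec m_i^{[j]},\vec b_{-i})$, which is exactly the claim. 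The main obstacle is the careful bookkeeping of the clearing price and the precise characterization of which units $i$ wins under $\vec m_i^{[j]}$ against an arbitrary conservative $\vec b_{-i}$ — in particular handling ties at the clearing bid and the degenerate cases $j=0$ and $s=j$ — but once those are set up correctly the inequality falls out of submodularity as above.
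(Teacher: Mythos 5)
Your final paragraph is essentially the paper's own proof: fix the number $s\leq j$ of units actually won under $(\vec{m}_i^{[j]},\vec{b}_{-i})$, observe that the uniform price satisfies $p\leq\beta_j(\vec{b}_{-i})$ (the paper pins this down as $p=\max\{m_i(s+1),\beta_s(\vec{b}_{-i})\}$ when $s<j$) and that each unwon marginal $m_i(t)$, $s<t\leq j$, is at most $\beta_j(\vec{b}_{-i})$, and then run exactly the same submodularity/telescoping comparison between $v_i(s)-s\,p$ and $v_i(j)-j\,\beta_j(\vec{b}_{-i})$. Despite the meandering middle (the false starts about $\beta_s(\vec{b}_{-i})$ and the deviation-cost digression), the argument you settle on is correct and matches the paper's route, so there is nothing substantive to add beyond tightening the justification that a rejected bid of $i$ cannot exceed $\beta_{s+1}(\vec{b}_{-i})\leq\beta_j(\vec{b}_{-i})$.
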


\noindent The proof of this lemma for our setting is given in {\bf Appendix C}. Now we can show the main result of this section:

\begin{theorem}
The Price of Anarchy of Bayes-Nash Equilibria with undominated support in Uniform Price Auctions for bidders with submodular valuation functions is at most $4-\frac{2}{k}$.
\label{theorem:bne-poa}
\end{theorem}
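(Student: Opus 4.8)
The plan is to adapt the now-standard smoothness-style argument for Bayes--Nash price of anarchy to the uniform price auction, using the two lemmas just stated as the two halves of the argument. The high-level idea is: for each bidder $i$ and each valuation profile, identify a specific deviation for $i$ that, by the equilibrium condition, lower-bounds $i$'s equilibrium utility; summing these deviation inequalities over all bidders and taking expectations produces a lower bound on the expected social welfare in terms of the expected optimal welfare and the expected welfare itself, which then rearranges to the claimed bound.

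Concretely, I would proceed as follows. Fix a Bayes--Nash equilibrium $\vec{B}$ with undominated support. For a bidder $i$ and a valuation profile $\vec{v}\in{\cal U}^i$ (so $i$ wins $x_i^{\vec{v}}\ge 1$ units in the optimum), the candidate deviation is $\vec{m}_i^{[t_i]}$ with $t_i=\lceil x_i^{\vec{v}}/2\rceil$ --- bidding truthfully for roughly half of $i$'s optimal allotment. By Lemma~\ref{lemma:bayesian-lemma}, against any conservative opponent profile $\vec{b}_{-i}$ this deviation yields utility at least $v_i(t_i)-t_i\cdot\beta_{t_i}(\vec{b}_{-i})$; since the support is undominated, opponents are conservative, so this applies. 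By submodularity (Proposition~\ref{proposition:valuation-functions}) $v_i(t_i)\ge v_i(x_i^{\vec v})/2$, since $t_i\ge x_i^{\vec v}/2$ and $v_i(t_i)/t_i\ge v_i(x_i^{\vec v})/x_i^{\vec v}$. The BNE inequality then gives, for each $i$ and each $\vec v\in{\cal U}^i$,
\[
\mathbb{E}_{\vec{b}\sim\vec{B}^{\vec v}}[u_i(\vec b)]\;\ge\;
\mathbb{E}_{\vec{v}_{-i}|v_i}\Bigl[\mathbb{E}_{\vec{b}_{-i}\sim\vec{B}^{\vec v_{-i}}}\bigl[\tfrac12 v_i(x_i^{\vec v})-t_i\beta_{t_i}(\vec b_{-i})\bigr]\Bigr].
\]
Summing over $i\in{\cal W}^{\vec v}$, adding the payments back to pass from utility to value, taking expectation over $\vec v\sim\pi$, and invoking Lemma~\ref{lemma:sw-cover} to bound $\sum_i t_i\beta_{t_i}(\vec b_{-i})$ by $p(\vec b)+\frac{k-1}{k}SW(\vec b)$, one obtains an inequality of the form $SW(\vec B)\gtrsim \frac12 \mathbb{E}_{\vec v}[SW(\vec x^{\vec v})] - (\text{terms bounded by }SW(\vec B))$. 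Rearranging yields $\mathbb{E}_{\vec v}[SW(\vec x^{\vec v})]\le (4-\tfrac{2}{k})\,SW(\vec B)$.

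I expect the main obstacles to be bookkeeping obstacles rather than conceptual ones. First, the expectations must be handled carefully: the deviation $\vec m_i^{[t_i]}$ depends on $\vec v$ through $x_i^{\vec v}$, but the BNE condition only allows deviations depending on $v_i$ --- so one must argue that the relevant quantities ($x_i^{\vec v}$ for the optimal assignment) depend only on $v_i$ once we fix the decomposition, or more cleanly, apply the deviation inequality conditionally and use linearity of expectation; this is the standard subtlety in Bayesian smoothness proofs and needs the independence of the $\pi_i$ and the averaging trick of comparing $i$'s true-type deviation against the distribution of opponents' bids. Second, one must correctly move between utilities and values: $u_i(\vec b)=v_i(x_i(\vec b))-x_i(\vec b)p(\vec b)$, so $\sum_i u_i(\vec b)=SW(\vec b)-\sum_i x_i(\vec b)p(\vec b)=SW(\vec b)-k\,p(\vec b)$ only when all $k$ units are sold; when fewer than $k$ units are allocated the price is $0$, so in all cases $\sum_i x_i(\vec b)p(\vec b)\le k\,p(\vec b)$, and one has to verify that the leftover $p(\vec b)$ and $SW(\vec b)$ terms coming from Lemma~\ref{lemma:sw-cover} combine with the $k\,p(\vec b)$ from the payments to leave only a controlled multiple of $SW(\vec b)$ on the right-hand side. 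Tracking the constants through this --- the $\tfrac12$ from the halving, the $\tfrac{k-1}{k}$ from Lemma~\ref{lemma:sw-cover}, and the price terms --- is precisely what produces the $4-\tfrac{2}{k}$ rather than a cruder constant, so the arithmetic must be done with care.
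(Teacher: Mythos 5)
Your proposal is correct and follows essentially the same route as the paper's proof: the half-allotment deviation $\vec{m}_i^{[\lceil x_i^{\vec{v}}/2\rceil]}$ bounded via Lemma~\ref{lemma:bayesian-lemma}, the halving $v_i(t_i^{\vec{v}})\geq\frac{1}{2}v_i(x_i^{\vec{v}})$ from submodularity, Lemma~\ref{lemma:sw-cover} to control $\sum_i t_i^{\vec{v}}\beta_{t_i^{\vec{v}}}(\vec{b}_{-i})$, absorption of the single $p(\vec{b})$ term into the payments, and the $\frac{2k-1}{k}$ arithmetic giving $4-\frac{2}{k}$. The only slip is that the left-hand side of your displayed BNE inequality should carry the outer expectation $\mathbb{E}_{\vec{w}_{-i}|v_i}$ over opponents' types rather than being the interim utility at the fixed profile $\vec{v}$ --- exactly the bookkeeping subtlety you yourself flag, and which the paper resolves via the independence/rearrangement computation in Appendix~C.
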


\begin{proof}
Consider a Bayes-Nash equilibrium $\vec{B}$. Fix a bidder $i$ and any valuation profile $\vec{v}=(v_i,\vec{v}_{-i})\in{\cal V}$. For $t_i^{\vec{v}}=\left\lceil\frac{x_i^{\vec{v}}}{2}\right\rceil$, and for any valuation profile $\vec{w}_{-i}\in{\cal V}_{-i}$ and strategy $\vec{b}_{-i}\sim\vec{B}_{-i}^{\vec{w}_{-i}}$, we apply Lemma~\ref{lemma:bayesian-lemma}. Then, we take the expectation over the randomized strategies of the other bidders and, subsequently, over all valuation profiles $\vec{w}_{-i}\in{\cal V}_{-i}$, to obtain:

\begin{align}
\displaystyle
\mathbb{E}_{\vec{w}_{-i}|v_i}\Bigl[
\mathbb{E}_{\vec{b}_{-i}\sim\vec{B}_{-i}^{\vec{w_{-i}}}}[
u_i(\vec{m}_i^{[t_i^{\vec{v}}]}, \vec{b}_{-i})]\Bigr]
&\displaystyle\geq
v_i(t_i^{\vec{v}})-t_i^{\vec{v}}\cdot
\mathbb{E}_{\vec{w}_{-i}|v_i}\Bigl[
\mathbb{E}_{\vec{b}_{-i}\sim\vec{B}_{-i}^{\vec{w_{-i}}}}[
\beta_{t_i^{\vec{v}}}(\vec{b}_{-i})]\Bigr]\nonumber\\
&\displaystyle\geq
\frac{v_i(x_i^{\vec{v}})}{2}-\mathbb{E}_{\vec{w}}\left[
\mathbb{E}_{\vec{b}\sim\vec{B}^{\vec{w}}}\Bigl[
t_i^{\vec{v}}\cdot\beta_{t_i^{\vec{v}}}(\vec{b}_{-i})\right]\Bigr]
\end{align}

\noindent The second inequality is justified as follows. By independence of the distributions $\{\pi_i\,|\,i\in[n]\}$, we have that for any $i\in[n]$ and any $v_i\in{\cal V}_i$:
\[
\begin{array}{lcl}
\displaystyle
\sum_{\vec{w}_{-i}}\pi(\vec{w}_{-i}|v_i)
&=&
\displaystyle
\sum_{\vec{w}_{-i}}\pi(\vec{w}_{-i})
=
\sum_{\vec{w}_{-i}}\pi(\vec{w}_{-i})\sum_{w_i}\pi(w_i)\\
&=&
\displaystyle
\sum_{(w_i,\vec{w}_{-i})}\pi(\vec{w}_{-i}|w_i)\pi_i(w_i)
=1=\sum_{\vec{w}}\pi(\vec{w})
\end{array}
\]

Also, by submodularity (Proposition~\ref{proposition:valuation-functions}) and monotonicity of valuation functions: $v_i(t_i^{\vec{v}})=v_i(\lceil\frac{x_i^{\vec{v}}}{2}\rceil)\geq\frac{1}{2}v_i(x_i^{\vec{v}})$. Because under BNE $\vec{B}$, bidder $i$ does not have an incentive to deviate, we have:

\[
\mathbb{E}_{\vec{w}_{-i}|v_i}\Bigl[
\mathbb{E}_{\vec{b}\sim\vec{B}^{(v_i,\vec{w}_{-i})}}[
u_i(\vec{b})]\Bigr]\geq
\mathbb{E}_{\vec{w}_{-i}|v_i}\Bigl[
\mathbb{E}_{\vec{b}_{-i}\sim\vec{B}_{-i}^{\vec{w_{-i}}}}[
u_i(\vec{m}_i^{[t_i^{\vec{v}}]}, \vec{b}_{-i})]
\Bigr]
\]

\noindent Thus:

\[
\mathbb{E}_{\vec{w}_{-i}|v_i}\Bigl[
\mathbb{E}_{\vec{b}\sim\vec{B}^{(v_i,\vec{w}_{-i})}}[
u_i(\vec{b})]\Bigr]
+
\mathbb{E}_{\vec{w}}\left[
\mathbb{E}_{\vec{b}\sim\vec{B}^{\vec{w}}}\Bigl[
t_i^{\vec{v}}\cdot\beta_{t_i^{\vec{v}}}(\vec{b}_{-i})\right]\Bigr]
\geq
\frac{v_i(x_i^{\vec{v}})}{2}
\]

\noindent We take expectation of both sides over the distribution of $\vec{v}\in{\cal V}$ and summing over all bidders yields the final expression:

\begin{align}
\displaystyle
\sum_i
\sum_{\vec{v}\in{\cal V}}\pi(\vec{v})\cdot
\mathbb{E}_{\vec{w}_{-i}|v_i}\Biggl[
\mathbb{E}_{\vec{b}\sim\vec{B}^{(v_i,\vec{w}_{-i})}}
\Bigl[u_i(\vec{b})\Bigr]\Biggr]
&\displaystyle+
\sum_{i}
\sum_{\vec{v}\in{\cal V}}\pi(\vec{v})\cdot
\mathbb{E}_{\vec{w}}\Biggl[
\mathbb{E}_{\vec{b}\sim\vec{B}^{\vec{w}}}\Bigl[
t_i^{\vec{v}}\cdot\beta_{t_i^{\vec{v}}}(\vec{b}_{-i})\Bigr]\Biggr]
\nonumber\\
&\displaystyle\geq
\sum_{i}
\sum_{\vec{v}\in{\cal V}}\pi(\vec{v})\cdot
\frac{v_i(x_i^{\vec{v}})}{2}\nonumber\\
&\displaystyle
=
\sum_{\vec{v}\in{\cal V}}\pi(\vec{v})\sum_{i\in{\cal W}^{\vec{v}}}
\frac{v_i(x_i^{\vec{v}})}{2}
=\frac{1}{2}\mathbb{E}_{\vec{v}}[SW(\vec{x^{\vec{v}}})]
\label{equation:bne-guarantee}
\end{align}

\noindent The last equality holds since it is enough to sum over $i\in{\cal W}^{\vec{v}}$, to compute the welfare produced at the optimal assignment with respect to $\vec{v}$. We show in {\bf Appendix C} that the first summand of the left-hand side of~(\ref{equation:bne-guarantee}) satisfies:
\begin{equation}
\sum_i
\sum_{\vec{v}\in{\cal U}^i}\pi(\vec{v})\cdot
\mathbb{E}_{\vec{w}_{-i}|v_i}\Biggl[
\mathbb{E}_{\vec{b}\sim\vec{B}^{(v_i,\vec{w}_{-i})}}
\Bigl[u_i(\vec{b})\Bigr]\Biggr]
=
\mathbb{E}_{\vec{v}}
\Biggl[\mathbb{E}_{\vec{b}\sim\vec{B}^{\vec{v}}}
\left[\sum_iu_i(\vec{b})\right]\Biggr]
\label{equation:left-hand-1}
\end{equation}
Similarly, we have for the second summand on the left-hand side of~\eqref{equation:bne-guarantee}:

\begin{align}
\displaystyle
\sum_{i}
\sum_{\vec{v}\in{\cal U}^i}\pi(\vec{v})\cdot
\mathbb{E}_{\vec{w}}\Biggl[
\mathbb{E}_{\vec{b}\sim\vec{B}^{\vec{w}}}\Bigl[
t_i^{\vec{v}}\cdot\beta_{t_i^{\vec{v}}}(\vec{b}_{-i})\Bigr]\Biggr]
&=\displaystyle
\sum_{i}
\sum_{\vec{v}\in{\cal U}^i}\pi(\vec{v})
\sum_{\vec{w}\in{\cal V}}\pi(\vec{w})\cdot
\mathbb{E}_{\vec{b}\sim\vec{B}^{\vec{w}}}\Bigl[t_i^{\vec{v}}\cdot
\beta_{t_i^{\vec{v}}}(\vec{b}_{-i})\Bigr]\nonumber\\
&=\displaystyle
\sum_{\vec{v}\in{\cal V}}\pi(\vec{v})\sum_{i\in{\cal W}^{\vec{v}}}
\sum_{\vec{w}\in{\cal V}}\pi(\vec{w})\cdot
\mathbb{E}_{\vec{b}\sim\vec{B}^{\vec{w}}}\Bigl[t_i^{\vec{v}}\cdot
\beta_{t_i^{\vec{v}}}(\vec{b}_{-i})\Bigr]\nonumber\\
&=\displaystyle
\sum_{\vec{v}\in{\cal V}}\pi(\vec{v})
\sum_{\vec{w}\in{\cal V}}\pi(\vec{w})\cdot
\mathbb{E}_{\vec{b}\sim\vec{B}^{\vec{w}}}\left[
\sum_{i\in{\cal W}^{\vec{v}}}t_i^{\vec{v}}\cdot
\beta_{t_i^{\vec{v}}}(\vec{b}_{-i})\right]\nonumber\\
&=\displaystyle
\mathbb{E}_{\vec{v}}
\left[\mathbb{E}_{\vec{b}\sim\vec{B}^{\vec{v}}}
\left[\sum_{i\in{\cal W}^{\vec{v}}}t_i^{\vec{v}}\cdot
\beta_{t_i^{\vec{v}}}(\vec{b}_{-i})
\right]\right]
\label{equation:left-hand-2}
\end{align}

\noindent Note that in the first term above, we sum only over $\vec{v}\in{\cal U}^i$, since for $\vec{v}\not\in{\cal U}^i$, $t_i^{\vec{v}}=0$.
By~\eqref{equation:sum1-9},~\eqref{equation:left-hand-2} and~\eqref{equation:bne-guarantee}, we obtain:

\begin{equation}
\mathbb{E}_{\vec{v}}
\left[\mathbb{E}_{\vec{b}\sim\vec{B}^{\vec{v}}}
\left[
\sum_iu_i(\vec{b})+
\sum_{i\in{\cal W}^{\vec{v}}}t_i^{\vec{v}}\cdot
\beta_{t_i^{\vec{v}}}(\vec{b}_{-i})
\right]\right]
\geq
\frac{1}{2}\mathbb{E}_{\vec{v}}[SW(\vec{x^{\vec{v}}})]
\label{equation:bne-bound-pre-final}
\end{equation}

\noindent To finish the proof, we substitute the second sum inside the expectation by its upper bound as given by Lemma~\ref{lemma:sw-cover} in~\eqref{equation:sw-cover}. Notice that $p(\vec{b})$ appearing in~\eqref{equation:sw-cover} is absorbed by the payment appearing in the utility $u_i(\vec{b})$ of at least one bidder. Thus we obtain:

\[
\mathbb{E}_{\vec{v}}
\left[\mathbb{E}_{\vec{b}\sim\vec{B}^{\vec{v}}}
\left[
\sum_iv_i\Bigl(x_i(\vec{b})\Bigr)+\frac{k-1}{k}SW(\vec{b})
\right]\right]
\geq
\frac{1}{2}\mathbb{E}_{\vec{v}}[SW(\vec{x^{\vec{v}}})]
\]
\noindent which essentially concludes the proof, by $\sum_iv_i\Bigl(x_i(\vec{b})\Bigr) = SW(\vec{b})$.\qed
\end{proof}

\bigskip

\noindent To complete our arguments for the proof of Theorem~\ref{theorem:bne-poa}, we prove Lemma~\ref{lemma:sw-cover}. Subsequently, we comment on how our arguments can be adjusted for the case of Bayes-Nash equilibria supported by general no-overbidding strategies (i.e., not restricted to undominated ones), to yield an upper bound of $4$ for the Price of Anarchy.

\medskip

\noindent {\bf Proof of Lemma~\ref{lemma:sw-cover}.}
For the winning bidders $i=1,\dots,\ell$ we have $x_i^*\geq 1$. Define $\psi_i=\sum_{j\leq i}x_j^*$. First, we prove by induction on $i$ that: 

\begin{equation}
\sum_{1\leq j\leq i}t_j\beta_{t_j}(\vec{b}_{-j})
\leq\beta_{t_1}(\vec{b}_{-1})+\sum_{1\leq j\leq \psi_i-1}\beta_j(\vec{b}).
\label{equation:sw-cover-induction}
\end{equation}

 \noindent For the basic step of the induction, consider $i=1$; then:
\begin{align}
\displaystyle
t_1\beta_{t_1}(\vec{b}_{-1})
&\leq\displaystyle
\beta_{t_1}(\vec{b}_{-1})+(t_1-1)\beta_{t_1}(\vec{b})
\label{equation:shift}\\
&\leq\displaystyle
\beta_{t_1}(\vec{b}_{-1})+\sum_{j=t_1}^{2t_1-2}\beta_j(\vec{b})\nonumber\\
&=\displaystyle
\beta_{t_1}(\vec{b}_{-1})+\sum_{j\leq \psi_1-1}\beta_j(\vec{b})\nonumber
\end{align}
Assuming~\eqref{equation:sw-cover-induction} holds for $i>1$, we show it remains true for $i+1$. We have:
\begin{align}
\sum_{j\leq i+1}t_j\beta_j(\vec{b}_{-j})
&=\displaystyle
\sum_{j\leq i}t_j\beta_j(\vec{b}_{-j})+
\Bigl(t_{i+1}\beta_{t_{i+1}}(\vec{b}_{-(i+1)})\Bigr)\nonumber\\
&\leq\displaystyle
\beta_{t_1}(\vec{b}_{-1})+\sum_{j\leq \psi_i-1}\beta_j(\vec{b})+
\Bigl(t_{i+1}\beta_{t_{i+1}}(\vec{b})\Bigr)\nonumber\\
&\leq\displaystyle
\beta_{t_1}(\vec{b}_{-1})+\sum_{j\leq \psi_{i+1}-1}\beta_j(\vec{b})
\label{equation:pause}
\end{align}
where~\eqref{equation:pause} is justified as follows: for any value of $\psi_i\geq 1$, we have $\beta_{t_{i+1}}(\vec{b})\leq\beta_j(\vec{b})$, for all
$j=\psi_i-1+\left\lceil\frac{x_{i+1}^*}{2}\right\rceil,\dots,\psi_i-1+x_{i+1}^*$ (a total of at least $\left\lceil\frac{x_{i+1}^*}{2}\right\rceil=t_{i+1}$ inequalities). By $\psi_{i+1}=\psi_i+x_{i+1}^*$, we obtain~\eqref{equation:pause} and, thus,~\eqref{equation:sw-cover-induction}. Setting $i=\ell$, thus, $\psi_{\ell}=k$ in~\eqref{equation:sw-cover-induction}, we have:
\begin{equation}
\displaystyle
\sum_{1\leq i\leq\ell}t_i\beta_{t_i}(\vec{b}_{-i})
\leq\displaystyle
\beta_{t_1}(\vec{b}_{-1})+\sum_{j=1}^{k-1}\beta_j(\vec{b})
\leq\displaystyle
\beta_{t_1}(\vec{b}_{-1})+SW(\vec{b})-\beta_k(\vec{b})
\label{equation:sw-pre-cover}
\end{equation}

For every bidder $i$ with $x_i^*\geq 1$, it must be also $x_i(\vec{b})\geq 1$, because the profile $\vec{b}$ consists of undominated strategies, thus, $b_i(1)=m_i(1)=v_i(1)$. Because $\beta_k(\vec{b})=\max_{i,j}b_i(j)$, by submodularity of valuation functions and by no-overbidding with respect to marginal bids, there exists a bidder $i_1$ such that $b_{i_1}(1)=v_{i_1}(1)=m_{i_1}(1)=\beta_k(\vec{b})$. Then, $\beta_k(\vec{b})=m_{i_1}(1)$ is also the largest marginal value contributed to $SW(\vec{b})$ and $m_i(1)\geq SW(\vec{b})/k$. Moreover, for every bidder $i$, it must hold that $b_i(1)=m_i(1)=v_i(1)\geq\beta_{x_i^*}(\vec{b})$, i.e., $b_i(1)$ is at least the $x_i^*$-th bid in $\beta_1(\vec{b}),\dots,\beta_k(\vec{b})$; otherwise, at least one bid in $\vec{b}$ is beyond the marginal value of some bidder, which also contradicts $\vec{b}$ consisting of undominated strategies. Then, for any bidder $i$ with $x_i^*\geq 1$ we have that: $\beta_{t_i}(\vec{b}_{-i})\leq p(\vec{b})$, because at least $x_i^*\geq t_i$ non-winning marginal bids in $\vec{b}$ (including $p(\vec{b})$) become the lowest winning in $\vec{b}_{-i}$. This yields $\beta_{t_1}(\vec{b}_{-1})\leq p(\vec{b})$ and we obtain~\eqref{equation:sw-cover} from~\eqref{equation:sw-pre-cover}.\qed

\medskip

We used our assumption of undominated support essentially only in the analysis that follows~\eqref{equation:sw-pre-cover}. If we replace this assumption with the standard no-overbidding assumption, we can continue our analysis from~\eqref{equation:sw-pre-cover}, by noticing that $\beta_{t_1}(\vec{b}_{-1})\leq\beta_{t_1}(\vec{b}_{-1})\leq\beta_k(\vec{b})$. Thus,~\eqref{equation:sw-cover} can be replaced by:

$$\sum_{1\leq i\leq\ell}t_i\beta_{t_i}(\vec{b}_{-i})\leq SW(\vec{b}).$$

\noindent We can use this latter upper bound in~\eqref{equation:bne-bound-pre-final}, along with $\sum_iu_i(\vec{b})\leq\sum_iv_i(\vec{b})=SW(\vec{b})$, to obtain an upper bound of $4$ on the Bayes-Nash Price of Anarchy.

\begin{corollary}
The mixed Bayes-Nash Price of Anarchy of the Uniform Price Auction for non-overbidding bidders with submodular valuation functions is at most $4$.
\label{corollary-2}
\end{corollary}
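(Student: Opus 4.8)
The plan is to re-run the argument of Theorem~\ref{theorem:bne-poa}, replacing the undominated-support hypothesis by the weaker no-overbidding hypothesis, and to track carefully the single point at which undominatedness was actually invoked. First I would check that the entire derivation up to and including inequality~\eqref{equation:bne-bound-pre-final} survives under no-overbidding: the only conservativeness that the deviation bound of Lemma~\ref{lemma:bayesian-lemma} demands of the opposing profile $\vec{b}_{-i}$ is that no bidder overbid (not the stronger ``conservative in marginal values'' property), while the remaining ingredients — submodularity and monotonicity of the $v_i$, the Bayes--Nash non-deviation inequality, and the expectation-interchange identities~\eqref{equation:left-hand-1}--\eqref{equation:left-hand-2} — make no reference to the bidders' strategies being undominated. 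Hence~\eqref{equation:bne-bound-pre-final} holds verbatim.

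Second, I would revisit Lemma~\ref{lemma:sw-cover}. Its inductive part, culminating in~\eqref{equation:sw-pre-cover}, uses only that deleting a bidder's marginal bids cannot raise any winning bid (so $\beta_j(\vec{b}_{-i})\leq\beta_j(\vec{b})$) and that the sum of all $k$ winning bids is at most $SW(\vec{b})$ (no-overbidding, summed over the winners); neither uses undominatedness, so~\eqref{equation:sw-pre-cover} is valid for every no-overbidding profile. The original proof then used undominatedness only to bound $\beta_{t_1}(\vec{b}_{-1})\leq p(\vec{b})$; instead I would write $\beta_{t_1}(\vec{b}_{-1})\leq\beta_k(\vec{b}_{-1})\leq\beta_k(\vec{b})$, using monotonicity of $\beta_\cdot$ in its index together with the same bid-deletion monotonicity. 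Substituting this into~\eqref{equation:sw-pre-cover}, the $\beta_k(\vec{b})$ terms cancel and one obtains the clean replacement $\sum_{1\leq i\leq\ell}t_i\beta_{t_i}(\vec{b}_{-i})\leq SW(\vec{b})$ for~\eqref{equation:sw-cover}.

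Finally I would substitute this bound into~\eqref{equation:bne-bound-pre-final} in place of Lemma~\ref{lemma:sw-cover}: the quantity inside the expectation is then at most $\sum_i u_i(\vec{b})+SW(\vec{b})$, and since the uniform price is nonnegative and all $k$ units are sold, $\sum_i u_i(\vec{b})=SW(\vec{b})-k\cdot p(\vec{b})\leq SW(\vec{b})$. Taking expectations over $\vec{v}$ gives $2\,\mathbb{E}_{\vec{v}}[SW(\vec{B}^{\vec{v}})]\geq\frac{1}{2}\,\mathbb{E}_{\vec{v}}[SW(\vec{x}^{\vec{v}})]$, i.e.\ a Bayes--Nash Price of Anarchy of at most $4$. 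There is no real computational obstacle; the one thing that must be verified with care is the bookkeeping claim that undominatedness enters the proof of Theorem~\ref{theorem:bne-poa} only in the passage from~\eqref{equation:sw-pre-cover} to~\eqref{equation:sw-cover} — in particular that Lemma~\ref{lemma:bayesian-lemma} is genuinely a statement about no-overbidding profiles — after which everything collapses to the elementary substitutions above.
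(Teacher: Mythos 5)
Your proposal is correct and takes essentially the same route as the paper: the paper also observes that the undominated-support assumption is used only in passing from~\eqref{equation:sw-pre-cover} to~\eqref{equation:sw-cover}, replaces the bound $\beta_{t_1}(\vec{b}_{-1})\leq p(\vec{b})$ by $\beta_{t_1}(\vec{b}_{-1})\leq\beta_k(\vec{b})$ to obtain $\sum_{1\leq i\leq\ell}t_i\beta_{t_i}(\vec{b}_{-i})\leq SW(\vec{b})$ under no-overbidding, and then plugs this into~\eqref{equation:bne-bound-pre-final} together with $\sum_iu_i(\vec{b})\leq SW(\vec{b})$ to get the bound of $4$. Your bookkeeping that Lemma~\ref{lemma:bayesian-lemma} and the expectation manipulations need nothing beyond no-overbidding matches the paper's claim, so no gap remains.
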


\newpage

\section*{Appendix A: Omitted Proofs from Section~\ref{section:undominated-pne}}

\subsection*{Proof of Lemma~\ref{lm:no-overbidding}}

Fix any bidder $i$ with a submodular valuation function $v_i$ and consider a bidding vector $\vec{b}_i=(b_i(1),\dots,b_i(k))$ of $i$, where $b_i(1)\geq b_i(2) \geq ... \geq b_i(k)$, as required by the bidding rule of the auction. 
Suppose that for some $j$, bidder $i$ overbids the marginal value for the $j$-th unit, i.e., $b_i(j) > m_i(j)$.
We will construct a bidding vector $\vec{b}_i'$ and we will show that $\vec{b}_i'$ weakly dominates $\vec{b}_i$. We define $\vec{b}_i'$ as follows: $b_i'(r) = b_i(r)$ for any $r\leq j-1$, and $b_i'(r) =m_i(r)$ for every $r\geq j$. Note that this is a valid bidding vector for the auction (this holds 
because $b_i'(j-1) = b_i(j-1) \geq b_i(j) > m_i(j)$).
We show that: for every $\vec{b}_{-i}$, $u_i(\vec{b}_i',\vec{b}_{-i})\geq u_i(\vec{b}_i,\vec{b}_{-i})$ and the inequality is strict for at least one vector $\vec{b}_{-i}$.

For a configuration $\vec{b}_{-i}$ of the other bidders, let $p(\vec{b})$ denote the uniform price under $\vec{b} = (\vec{b}_i, \vec{b}_{-i})$.
We start first, with configurations $\vec{b}_{-i}$ for which $j > x_i(\vec{b})$. In this case, if bidder $i$ plays according to $\vec{b}_i'$, he will retain at least the same utility, because $b_i(j)$ does not grant him any unit and neither does $m_i(j)$. Hence, he will keep winning under $\vec{b}_i'$ the same number of units and if his $j$-th bid was the price-setting bid, the price may even decrease and he will be strictly better. 
Consider now configurations $\vec{b}_{-i}$ for which $j\leq x_i(\vec{b})$. We examine the following subcases:

\medskip

\noindent{\bf (I):} $p(\vec{b}) < m_i(x_i(\vec{b}))$. Then, by playing $\vec{b}_i'$, bidder $i$ will still win the same number of units as before. The price under ($\vec{b}_i'$, $\vec{b}_{-i}$) will either remain the same or may decrease in the case that the price-setting bid was an overbid by bidder $i$.

\medskip

\noindent{\bf (II):} $p(\vec{b}) = m_i(x_i(\vec{b}))$. In this case, various scenarios may occur depending on the possible configurations for $\vec{b}_{-i}$, and on the possible appearance of ties. First, by switching to $\vec{b}_i'$, bidder $i$ will either win the same number of units as before, or he may win less if some tie is resolved against him. In the cases that he wins the same number of units as in $\vec{b}$, it is easy to see that the price has either remained the same or it may even have fallen, e.g., if the price-setting bid in $\vec{b}$ was $b_i(x_i(\vec{b})+1) > m_i(x_i(\vec{b})+1)$, which is reduced in $\vec{b}_i'$ to $m_i(x_i(\vec{b})+1)$. Hence bidder $i$ has at least the same utility as before. In the cases where bidder $i$ wins less units, the only possibility is that the price has remained the same and it is only because of ties that $i$ lost some of his previously won units. But then for the units that he lost, their marginal value is the same as the price, hence bidder $i$ simply had zero utility for them in $\vec{b}$. Thus he will still have the same utility under $\vec{b}_i'$ as before. 

\medskip

\noindent {\bf (III):} $p(\vec{b}) > m_i(x_i(\vec{b}))$. Then in $\vec{b}$, $i$ overpays his marginal value for at least one won unit; in this case bidder $i$ can switch to $\vec{b}_i'$ and {\em strictly increase} his utility. To see this, note that under ($\vec{b}_i', \vec{b}_{-i}$) bidder $i$ will still win the units that give him nonnegative utility (i.e., have marginal value at least $p(\vec{b})$), and he will lose only units that he could not afford anyway. 

\medskip

\noindent Conclusively, overbidding any marginal value is a weakly dominated strategy for every bidder $i$.\qed

\subsection*{Proof of Lemma~\ref{lm:vi(1)}}

By Lemma \ref{lm:no-overbidding}, undominated strategies are conservative w.r.t. marginal bids, thus no bidder may exaggerate his bid for the first unit. Let $\vec{b}_i$ denote such a bidding vector for bidder $i$, where $b_i(1)<m_i(1)$. Let $\vec{b}_i'$ denote the bidding vector where $b_i'(1)=m_i(1)$ and $b_i'(j)=b_i(j)$, for $j=2, \dots, k$. For any configuration $\vec{b}_{-i}$ due to all other bidders, we show that: $u_i(\vec{b}_i',\vec{b}_{-i})\geq u_i(\vec{b}_i',\vec{b}_{-i})$ and the inequality is strict for at least one such $\vec{b}_{-i}$.

If $x_i(\vec{b}_i,\vec{b}_{-i})\geq 1$, bidder $i$ will maintain his allocation and his utility by increasing $b_i(1)$ to $m_i(1)=b_i'(1)$. If $x_i(\vec{b}_i, \vec{b}_{-i})=0$, then an increase of $b_i(1)$ to $m_i(1)=b_i'(1)$ will either maintain the utility of $i$ to $0$ (if the current minimum winning bid is at least equal to $m_i(1)$), or increase it (if the current minimum winning bid is less than $m_i(1)$). In the latter case, $i$ wins one unit and the minimum winning bid of $(\vec{b}_i,\vec{b}_{-i})$ becomes the new uniform price that $i$ pays exactly once. Conclusively, bidding $b_i(1) < v_i(1)$ is a weakly dominated strategy.\qed

\newpage

\section*{Appendix B: Proof of Lemma~\ref{lemma:poa-general-ub}}

Let $\vec{b}$ be any pure Nash equilibrium configuration in undominated strategies. For simplicity we use $x_i$ for $x_i(\vec{b})$ and $\beta_j$ for $\beta_j(\vec{b})$, $j=1,\dots, k$. For $SW(\vec{b})$ we have:

\begin{equation}
SW(\vec{b}) = 
\sum_{i\in{\cal W}_0(\vec{x})}v_i(x_i)+
\sum_{i\in{\cal W}_1(\vec{x})}v_i(x_i)+
\sum_{i\in{\cal W}_2(\vec{x})}v_i(x_i)
\label{equation:sw-partition}
\end{equation}

\noindent By definition of ${\cal W}_0$, we can write the first term of~\eqref{equation:sw-partition}, $\sum_{i\in{\cal W}_0(\vec{x})}v_i(x_i)$, as:

\begin{align}
\sum_{i\in{\cal W}_0(\vec{x})}\Bigl(
v_i(x_i^*)+v_i(x_i)-v_i(x_i^*)
\Bigr)
&=
\sum_{i\in{\cal W}_0(\vec{x})}\left(
v_i(x_i^*)+
\sum_{j=1+x_i^*}^{x_i}m_i(j)
\right)\nonumber\\
&\geq
\sum_{i\in{\cal W}_0(\vec{x})}\left(
v_i(x_i^*)+\sum_{j=1+x_i^*}^{x_i}b_i(j)
\right)
\label{equation:first-term}
\end{align}

\noindent The last inequality is due to the fact that $\vec{b}$ is an {\em undominated} pure Nash equilibrium, thus, Lemma~\ref{lm:no-overbidding} applies. For the third term of~\eqref{equation:sw-partition}, we have similarly:

\begin{equation}
\sum_{i\in{\cal W}_2(\vec{x})}v_i(x_i)
=
\sum_{i\in{\cal W}_2(\vec{x})}\sum_{j=1}^{x_i}m_i(j)
\geq
\sum_{i\in{\cal W}_2(\vec{x})}\sum_{j=1}^{x_i}b_i(j)
\label{equation:third-term}
\end{equation}

\noindent Substituting~\eqref{equation:first-term} and~\eqref{equation:third-term} in~\eqref{equation:sw-partition}, we obtain:

\begin{equation*}
SW(\vec{b}) 
\geq 
\sum_{i\in{\cal W}_0(\vec{x})}\left(
v_i(x_i^*)+\sum_{j=1+x_i^*}^{x_i}b_i(j)
\right)+
\sum_{i\in{\cal W}_1(\vec{x})}v_i(x_i)+
\sum_{i\in{\cal W}_2(\vec{x})}\sum_{j=1}^{x_i}b_i(j)
\end{equation*}

\noindent Observe that, for every unit missed under $\vec{b}$ by any bidder
$i\in{\cal W}(\vec{x}^*)\cap{\cal W}_1(\vec{x})$, there exists a bidder $i'\in{\cal W}_0(\vec{x})\cup {\cal W}_2(\vec{x})$ that obtains this unit. If $i$ missed $x_i^*-x_i>0$ units in $\vec{b}$, there are at least as many bids issued by bidders in ${\cal W}_0(\vec{x})\cup {\cal W}_2(\vec{x})$, that won collectively these units. These bids sum up to at least $\sum_{j=1}^{x_i^*-x_i}\beta_j$, i.e. the sum of the $x_i^*-x_i$ lowest winning bids in $\vec{b}$. I.e.:
\begin{equation}
\sum_{i\in{\cal W}_0(\vec{x})}\sum_{j=1+x_i^*}^{x_i}b_i(j)+
\sum_{i\in{\cal W}_2(\vec{x})}\sum_{j=1}^{x_i}b_i(j)
\geq
\sum_{i\in{\cal W}_1(\vec{x})}\sum_{j=1}^{x_i^*-x_i}\beta_j
\label{equation:first-third-term}
\end{equation}
\noindent Using~\eqref{equation:first-third-term} in the last lower bounding expression of $SW(\vec{b})$, we obtain:
%
\begin{equation}
SW(\vec{b})\geq
\sum_{i\in {\cal W}_0(\vec{x})}v_i(x_i^*)+
\sum_{i\in {\cal W}_1(\vec{x})}\Bigl(v_i(x_i)+\sum_{j=1}^{x_i^*-x_i}\beta_j\Bigr)
\label{equation:conservative-pne}
\end{equation}

\noindent Finally, we notice that ${\cal W}_0(\vec{x})$ and ${\cal W}_1(\vec{x})$ are a partition of ${\cal W}(\vec{x}^*)$, thus:

\begin{equation}
SW(\vec{b}^*) = \sum_{i\in{\cal W}_0(\vec{x})}v_i(x_i^*)+\sum_{i\in{\cal W}_1(\vec{x})}v_i(x_i^*)
\label{equation:opt-partition}
\end{equation}

\noindent By~\eqref{equation:conservative-pne}
and~\eqref{equation:opt-partition} we obtain~\eqref{equation:poa-general-ub}.
\qed
\newpage

\section*{Appendix C: Omitted Proofs from Section~\ref{section:bne-poa}}

\subsection*{Proof of Lemma~\ref{lemma:bayesian-lemma}}

Fix any player $i$. The statement trivially holds for $j=0$, hence consider $j\geq 1$.
For any bidding configuration $\vec{b}_{-i}$, we let
$\beta_1(\vec{b}_{-i})\leq\cdots\leq\beta_k(\vec{b}_{-i})$ denote the winning
bids, if $i$ was not present, in non-decreasing order. 
Fix also the index $j$ ($1\leq j \leq k$), and using the submodular valuation function $v_i$ of $i$, define the bidding vector $\vec{m}_i^{[j]}=(m_i(1),\dots,m_i(j),0,\dots,0)$. 

Assume that bidder $i$ has won $s$ units in the configuration $(\vec{m}_i^{[j]}, \vec{b}_{-i})$. Obviously $s\leq j$. Let $p$ denote the price that he pays for these units. In the case that $s=j$, $p = \beta_j(\vec{b}_{-i})$, and the statement of the Lemma trivially holds. In the case that $s<j$, we have that $p = \max \{m_i(s+1), \beta_s(\vec{b}_{-i}) \}$. This is because the highest losing bid may be either the next bid of bidder $i$ (i.e., $m_i(s+1)$), or the highest losing bid in $\vec{b}_{-i}$, which is $\beta_s(\vec{b_{-i}})$. This implies that $p \leq \beta_j(\vec{b}_{-i})$. Thus we can derive the following bound on the utility of bidder $i$:

\[
\begin{array}{lcl}
u_i(\vec{m}_i^{[j]},\vec{b}_{-i})
& \geq &
v_i(s)-s\cdot\beta_{j}(\vec{b}_{-i})
\smallskip\\
& \geq & \displaystyle
v_i(s) - s\cdot\beta_{j}(\vec{b}_{-i}) + \sum_{r=1}^{j-s} (m_i(s+r) - \beta_{j}(\vec{b}_{-i}))\smallskip\\
& = & \displaystyle
v_i(s) + \sum_{r=1}^{j-s} m_i(s+r) -j\cdot\beta_{j}(\vec{b}_{-i})\smallskip\\
& = & v_i(j)-j\cdot\beta_{j}(\vec{b}_{-i})\smallskip\\

\end{array}
\]

\noindent The second inequality above holds because $m_i(s+r)\leq m_i(s+1) \leq \beta_{j}(\vec{b}_{-i})$ for all marginal values beyond the 
$s$-th one. This completes the proof.\qed

\subsection*{Omitted Part of Proof of Theorem~\ref{theorem:bne-poa}}

Consider the first of the two summands of~(\ref{equation:bne-guarantee}). We
explain the derivations below.

\begin{align}
&\displaystyle
\sum_i
\sum_{\vec{v}\in{\cal V}}
\pi(\vec{v})
\mathbb{E}_{\vec{w}_{-i}|v_i}\Biggl[
\mathbb{E}_{\vec{b}\sim\vec{B}^{(v_i,\vec{w}_{-i})}}
\Bigl[u_i(\vec{b})\Bigr]\Biggr]
\label{equation:sum1-0}\\
&=\displaystyle
\sum_i\sum_{\vec{v}_{-i}\in{\cal V}_{-i}}\pi(\vec{v}_{-i}|v_i)
\sum_{v_i\in V_i}\pi_i(v_i)
\sum_{\vec{w}_{-i}\in{\cal V}_{-i}}\pi(\vec{w}_{-i}|v_i)
\mathbb{E}_{\scriptscriptstyle{\vec{b}\sim\vec{B}^{(v_i,\vec{w}_{-i})}}}
\Bigl[u_i(\vec{b})\Bigr]
\nonumber\\
&=\displaystyle
\sum_i\sum_{v_i\in V_i}\pi_i(v_i)
\sum_{\vec{v}_{-i}\in{\cal V}_{-i}}\pi(\vec{v}_{-i}|v_i)
\sum_{\vec{w}_{-i}\in{\cal V}_{-i}}\pi(\vec{w}_{-i}|v_i)
\mathbb{E}_{\scriptscriptstyle{\vec{b}\sim\vec{B}^{(v_i,\vec{w}_{-i})}}}
\Bigl[u_i(\vec{b})\Bigr]
\nonumber\\
&=\displaystyle
\sum_i\sum_{v_i\in V_i}\pi_i(v_i)
\sum_{\vec{w}_{-i}\in{\cal V}_{-i}}\pi(\vec{w}_{-i}|v_i)
\mathbb{E}_{\vec{b}\sim\vec{B}^{(v_i,\vec{w}_{-i})}}
\Bigl[u_i(\vec{b})\Bigr]
\label{equation:sum1-7}\\
&=\displaystyle
\sum_{\vec{v}\in {\cal V}}\pi(\vec{v})\sum_i
\mathbb{E}_{\vec{b}\sim\vec{B}^{(v_i,\vec{w}_{-i})}}
\Bigl[u_i(\vec{b})\Bigr]
\label{equation:sum1-8}\\
&=\displaystyle
\sum_{\vec{v}\in{\cal V}}\pi(\vec{v})
\mathbb{E}_{\vec{b}\sim\vec{B}^{\vec{v}}}
\left[\sum_iu_i(\vec{b})\right]=
\mathbb{E}_{\vec{v}}
\Biggl[\mathbb{E}_{\vec{b}\sim\vec{B}^{\vec{v}}}
\left[\sum_iu_i(\vec{b})\right]\Biggr]
\label{equation:sum1-9}
\end{align}
To obtain the equality following~\eqref{equation:sum1-0}, we analyze $\sum_{\vec{v}\in{\cal V}}\pi(\vec{v})$ to: 
$$
\sum_{\vec{v}_{-i}\in{\cal V}_{-i}}\pi(\vec{v}_{-i})\sum_{v_i\in V_i}\pi_i(v_i)=\sum_{(v_i,\vec{v}_{-i})}\pi(\vec{v}_{-i}|v_i)\pi(v_i)
$$
For~\eqref{equation:sum1-7}, it suffices to move $\sum_{\vec{v}_{-i}\in{\cal V}_{-i}}\pi(\vec{v}_{-i}|v_i)$ all the way to the right and observe that it equals $1$ by independence of the valuation distributions. Finally, we obtain~(\eqref{equation:sum1-8} by condensing:
$$
\displaystyle
\sum_i
\sum_{v_i\in V_i}\pi_i(v_i)
\sum_{\vec{w}_{-i}\in{\cal V}_{-i}}\pi(\vec{w}_{-i}|v_i)
$$
into $\sum_{\vec{v}\in{\cal V}}\pi(\vec{v})$ and moving $\sum_i$ to the right, to sum over the expected valuations of bidders for every valuation profile $\vec{v}\in{\cal V}$.

\end{document}